\newcommand{\tabincell}[2]{\begin{tabular}{@{}#1@{}}#2\end{tabular}}
\newtheorem{theorem}{Theorem}
\newtheorem{remark}{Remark}
\def\url@leostyle{%
	\@ifundefined{selectfont}{\def\UrlFont{\sf}}{\def\UrlFont{\small\ttfamily}}}
\begin{document}

\title{Fast List Decoding of High-Rate Polar Codes}

\author{\IEEEauthorblockN{Yang Lu, Ming-Min Zhao,~\IEEEmembership{Member,~IEEE}, Ming Lei,~\IEEEmembership{Member,~IEEE}, \\ and Min-Jian Zhao,~\IEEEmembership{Senior Member,~IEEE}}
\thanks{The authors are with the College of Information Science and Electronic Engineering, Zhejiang University, Hangzhou 310027, China (email: \{22031097, zmmblack, lm1029, mjzhao\}@zju.edu.cn).}}

\maketitle

\begin{abstract} Due to the ability to provide superior error-correction performance, the successive cancellation list (SCL) algorithm is widely regarded as one of the most promising decoding algorithms for polar codes with short-to-moderate code lengths. However, the application of SCL decoding in low-latency communication scenarios is limited due to its sequential nature. To reduce the decoding latency, developing tailored fast and efficient list decoding algorithms of specific polar substituent codes (special nodes) is a promising solution. Recently, fast list decoding algorithms are proposed by considering special nodes with low code rates. Aiming to further speedup the SCL decoding, this paper presents fast list decoding algorithms for two types of high-rate special nodes, namely single-parity-check (SPC) nodes and sequence rate one or single-parity-check (SR1/SPC) nodes. In particular, we develop two classes of fast list decoding algorithms for these nodes, where the first class uses a sequential decoding procedure to yield decoding latency that is linear with the list size, and the second further parallelizes the decoding process by pre-determining the redundant candidate paths offline. Simulation results show that the proposed list decoding algorithms are able to achieve up to 70.7\% lower decoding latency than state-of-the-art fast SCL decoders, while exhibiting the same error-correction performance.
\end{abstract}
\begin{IEEEkeywords}
Polar codes, SCL decoding, special nodes, fast list decoding. 
\end{IEEEkeywords}

\section{Introduction}
\label{section:1}
\IEEEPARstart{P}{roved} to achieve the symmetric capability of memoryless channels \cite{Arikan2009Channel}, polar codes have been adopted as the control channel coding scheme for the enhanced mobile broadband (eMBB) scenario in the latest 5G cellular communication standard \cite{5Gstandard}. The very first decoding algorithm for polar codes is successive cancellation (SC) \cite{Arikan2009Channel}, which is able to approach the maximum likelihood (ML) decoding performance as the code length tends towards infinity. However, the SC decoding performance degrades significantly for practical short-to-moderate code lengths. To narrow the performance gap between SC and ML, alternative decoding algorithms have been proposed, among which SC list (SCL) is one of the most promising \cite{Tal2015List}. By maintaining a list of more reliable candidate codewords during the decoding process, SCL decoding provides much better error-correction performance than SC, at the cost of increased implementation complexity. When employing cyclic redundancy check (CRC) codes as a genie to select the most probable codeword from the list, the error-correction performance of the SCL decoder can be further improved \cite{Niu2012CRC}. With a large list size, it was shown that the CRC-aided SCL (CA-SCL) decoder can achieve near-ML performance, making polar codes competitive with the state-of-the-art low-density parity-check (LDPC) and turbo codes \cite{Balatsoukas2017Comparison}.

In spite of the promising error-correction performance, both SC and SCL suffer from high decoding latency due to the inherent sequential bit-by-bit decision process. To facilitate their applications in low-latency use cases, various works have been proposed to develop parallel decoding techniques for polar codes \cite{Alamdar2011Simplified,Sarkis2013Increasing,Sarkis2014Fast,Hanif2017Fast,Condo2018Generalized,Sarkis2016Fast,Hashemi2016Fast,Hashemi2017Fast,Ardakani2019Fast,Zhao2021Minimum,Zheng2021Threshold,Ren2022Sequence,Lu2023Fast_Conf,Lu2023Fast}. The main advantage of these schemes is that they significantly reduce the time complexity of the conventional SC and SCL decoders with reduced or slightly increased space complexity. In particular, these schemes identify various kinds of special polar constituent codes (special nodes) which have specific frozen and information bit patterns, and implement efficient node-based decoders to decode these bits in parallel. For instance, it was discovered in \cite{Alamdar2011Simplified,Sarkis2013Increasing,Sarkis2014Fast} that rate zero (R0) nodes without information bits, rate one (R1) nodes without frozen bits, repetition (REP) nodes with a single information bit in the rightmost position, and single parity-check (SPC) nodes with a single frozen bit in the leftmost position, can be decoded in parallel using low-complexity decoding algorithms. As compared to the conventional SC decoding algorithm, these highly-parallel decoding algorithms are able to simplify the decoding process and reduce the latency significantly, without incurring any decoding performance degradation. Likewise, fast SC decoders were later presented for five new types of special nodes, namely the Type I-V nodes, thus further decreasing the SC decoding latency \cite{Hanif2017Fast}. 

However, all these works require separate decoders for each node type, which inevitably complicates the hardware implementation. In \cite{Condo2018Generalized}, two types of more general nodes, i.e., the generalized REP (G-REP) node and the generalized parity-check (G-PC) node, were proposed to provide unified descriptions of existing special nodes, such as Type I-V nodes. Moreover, with tailored fast decoding algorithms for G-REP and G-PC nodes, the SC decoding latency can be further reduced. Recently, the authors in \cite{Zheng2021Threshold} identified a class of sequence R0 or REP (SR0/REP) nodes which envelops most of the aforementioned low-rate special nodes as special cases. Based on exhaustive search of the information bits in REP nodes, a fast decoding algorithm for the SR0/REP node was proposed to achieve a higher degree of parallelism without degrading the error-correction performance. Although the decoding of SR0/REP nodes guarantees further latency reduction with respect to the previous works, it comes at the cost of increased hardware resource consumption \cite{Zheng2020Implementation}. 

The identification and utilization of the aforementioned special nodes were also extended to SCL decoding \cite{Hashemi2016Fast,Hashemi2017Fast,Condo2018Generalized,Ardakani2019Fast,Zhao2021Minimum,Ren2022Sequence}. Specifically, a simplified SCL (SSCL) decoder was first proposed in \cite{Hashemi2016Fast}, which is able to simplify the decoding of R0, R1, REP and SPC nodes while preserving the error-correction performance of the conventional SCL decoder. This work was later advanced in \cite{Hashemi2017Fast}, where it was shown that redundant path splitting processes associated with a specific list size can be avoided for R1 and SPC nodes. By applying this optimized path splitting strategy to the SSCL decoder, the resultant Fast-SSCL decoder yields exactly the same error-correction performance yet with reduced decoding latency and computational complexity. Moreover, the works \cite{Condo2018Generalized} and \cite{Ardakani2019Fast} extended the above fast SCL decoding techniques by taking the G-PC/G-REP and Type I-V nodes into consideration, respectively. A list decoding algorithm for SR0/REP nodes was presented in \cite{Ren2022Sequence}, which was able to increase the throughput and reduce the latency significantly.

For high-rate special nodes, including R1, SPC and Type III-IV, etc., the mainstream fast list decoders share a similar sequential procedure based on sphere decoding \cite{Hashemi2016Fast}, where path splitting and pruning are performed step by step at the node level. Although the traversal of the whole decoding tree is avoided, these fast sequential list (FSL) decoders fall short in providing higher degrees of parallelism due to their sequential nature. In \cite{Zhao2021Minimum}, the authors showed that the sequential path splitting process for R1 nodes \cite{Hashemi2017Fast} can be further simplified and parallelized to achieve considerable latency reduction. Specifically, by pre-collecting all flipping bit index combinations that will certainly lead to redundant candidate paths, a minimum-combinations set (MCS) can be determined offline. Based on the MCS, a fast parallelized list (FPL) decoder was presented for R1 nodes, which exhibits considerable speedup when compared to the Fast-SSCL decoder, without any performance degradation.

To achieve further latency reduction, it is critical to parallelize the fast list decoders for other high-rate special nodes with more complicated parity constraints. Recently, another generalized sequence node composed of a sequence of R1 or SPC nodes (SR1/SPC) was discovered to envelop most of the existing high-rate special nodes \cite{Lu2023Fast_Conf,Lu2023Fast}. To decode SR1/SPC nodes efficiently, the authors analysed parity constraints caused by the frozen bits, leading to a validity rule that should be satisfied during the decoding process. Based on the validity rule and ML rule, a highly parallelized decoding algorithm was finally proposed for SR1/SPC nodes, which further reduces the SC decoding latency significantly. In this paper, we take one step further by presenting fast list decoding algorithms for SPC and SR1/SPC nodes, such that a large range of high-rate polar constitution codes can be decoded more efficiently. Our contributions are summarized as follows:
\begin{itemize}
	\item For SPC nodes, we show how to generate the MCS offline when different parity checks are taken into consideration, such that a large number of redundant candidate paths can be pre-determined and ignored during the decoding process. Based on the MCS, the conventional FSL decoder can be highly parallelized, resulting in a FPL decoder without any degradation in error-correction performance.
	\item For SR1/SPC nodes, two classes of fast list decoding algorithms, namely FSL and FPL, are presented to decode the descendant nodes. Specifically, the proposed algorithms mainly consist of two stages, where stage I is devoted to decoding the source node and stage II decodes the whole special node based on the source node decoding results. These two decoding algorithms can reduce the decoding latency to varying degrees, providing a flexible tradeoff between decoding latency and complexity. 
	\item Detailed decoding latency analysis is presented to show the superiority of the proposed decoders. Besides, simulation results show that implementing the proposed decoders can improve the parallelism of the conventional fast SCL decoders significantly, up to 70.7\% decoding latency reduction can be achieved as compared to the state-of-the-art (SOTA) fast SCL decoders, without any performance degradation.
\end{itemize}

The remainder of this paper is organized as follows. Section~\ref{sec2} reviews the backgrounds on polar codes, SC/SCL decoding, and fast decoding techniques. In Section~\ref{sec3}, we present a fast list decoding algorithm for SPC nodes. Section~\ref{sec4} provides two list decoding algorithms for SR1/SPC nodes. Section~\ref{sec5} presents several a detailed decoding latency analysis, empirical optimization methods for the proposed decoders, and provides simulation results to evaluate the decoding performance and latency. Finally, conclusions are drawn in Section~\ref{sec6}.

\section{Preliminaries}
\label{sec2}
\emph{Notations}: Scalars, vectors, and matrices are respectively denoted by lower case, boldface lower case, and boldface upper case letters. $\mathcal{N}$ represents the natural number set. $\%$ denotes the mod operation, $\operatorname{sgn}(a)$ denotes the sign of a scalar $a$ and $\min(\bm{a})$ returns the minimum element in $\bm{a}$. For two arbitrary sets $\mathcal{A}$ and $\mathcal{B}$, $\triangle$ denotes the XOR operation such that $\mathcal{A} \triangle \mathcal{B} = (\mathcal{A}-\mathcal{B}) \cup (\mathcal{B}-\mathcal{A})$. Besides, $\oplus$ denotes the bitwise XOR operation and $\otimes$ denotes the Kronecker product. $\bm{A}^{\otimes n}$ denotes the $n$-th Kronecker power of $\bm{A}$.

\vspace{-1em}
\subsection{Polar Codes}
A polar code with code length $N=2^{n}$ and information length $K$, denoted by $\mathcal{P}(N, K)$, maps a message vector $\bm{u} = (u[1], u[2], \ldots, u[N])$ into a polar codeword $\bm{x} = (x[1], x[2], \ldots, x[N])$ by using a linear transformation $\bm{x} = \bm{u} \bm{G}_N$, where $\bm{G}_N = \bm{F}^{\otimes n}$ is the generator matrix with $\bm{F} = \begin{bmatrix}\begin{smallmatrix}	1 & 0 \\ 1 & 1 \end{smallmatrix}\end{bmatrix}$ being the base kernel. The principle of channel polarization reveals that the bits in $\bm{u}$ corresponds to bit-channels with different reliabilities \cite{Arikan2009Channel}. Amongst the $N$ bit-channels, the $K$ most reliable ones are chosen to transmit information bits, while the remaining $N-K$ ones are assigned with frozen bits (usually set to 0). The information and frozen bit-channels can be distinguished with an indicator vector $\bm{c} = (c[1], c[2], \ldots, c[N])$, i.e.,
\begin{equation}
	\begin{aligned}
		c[i] = \begin{cases} 1, & \textrm{if}\ i \in \mathcal{I} \\ 0, & \textrm{if}\ i \in \mathcal{I}^c\end{cases},
	\end{aligned}
	\label{eqn:1}
\end{equation}
where $\mathcal{I}$ and $\mathcal{I}^c$ denote the sets of information and frozen bit indices, respectively, which are both known to the encoder and decoder.

After encoding, the codeword vector $\bm{x}$ is then modulated and transmitted over the channel. Throughout this paper, we consider binary phase shift keying (BPSK) modulation and additive white Gaussian noise (AWGN) channel. After receiving $\bm{x}$ from the channel, the receiver provides log-likelihood ratio (LLR) of the received bits to the polar decoder, and an estimate of the original message $\hat{\bm{u}}$ is obtained.
\vspace{-1em}
\subsection{SC and SCL Decoding}
SC and SCL decoding can be interpreted as a binary tree traversal which starts from the root node to the leaf node and from the left branch to the right. At the $s$-th level of the decoding tree, the $i$-th node passes soft information, i.e., the LLR vector $\alpha_{s,i}[1:2^s]$, to its child nodes, whereas it also receives hard information, i.e., the estimated codeword $\beta_{s,i}[1:2^s]$, from its child nodes in return, where $1 \leq i \leq 2^{n-s}$ and $0 \leq s \leq n$. In particular, the LLRs are updated by
\begin{subequations}
\begin{gather}
	\begin{split}
		\alpha_{s-1,2i-1}[k] =& \operatorname{sgn}(\alpha_{s,i}[k]) \operatorname{sgn}(\alpha_{s,i}[k+2^{s-1}]) \\
		& \min(|\alpha_{s,i}[k]|, |\alpha_{s,i}[k+2^{s-1}]|), 
	\end{split} \\
	\begin{split}
		\alpha_{s-1,2i}[k] = & (1-2\beta_{s-1,2i-1}[k])\alpha_{s,i}[k] \\
		& +\alpha_{s,i}[k+2^{s-1}], 
	\end{split}
\end{gather}
\label{eqn:2}
\end{subequations}
whereas the codeword is updated according to
\begin{equation}
\begin{aligned}
\beta_{s,i}[k] = \begin{cases}\beta_{s-1,2i-1}[k] \oplus \beta_{s-1,2i}[k], & \textrm{if}\ 1 \leq k \leq 2^{s-1} \\
\beta_{s-1,2i}[k], & \textrm{otherwise}\end{cases}.
\end{aligned}
\label{eqn:3}
\end{equation}

At the leaf level $s=0$, SC decoding chooses a locally optimal estimate, i.e., the hard decision output, of each information bit, which is shown as follows:
\begin{equation}
\begin{aligned}
\hat{u}[i] = \operatorname{HD}(\alpha[i]) =  \begin{cases}\frac{1-\textrm{sgn}(\alpha[i])}{2}, & \textnormal{if $i \in \mathcal{I}$} \\ 0, & \textnormal{if $i \in \mathcal{I}^c$}\end{cases},
\end{aligned}
\label{eqn:4}
\end{equation}
where $\hat{u}[i]$ and $\alpha[i]$ are the estimate and LLR of $u[i]$, respectively, and $\operatorname{HD}(\cdot)$ is the hard decision function.

Instead of keeping only one estimated codeword (path), SCL decoding allows for maintaining a list of up to $L$ candidate paths by considering both hypotheses, i.e., $u[i]=0$ and $u[i]=1$ for each information bit (also known as path splitting). As such, the number of candidate paths will double after decoding each information bit, until it exceeds the list size $L$, then only the $L$ most reliable candidate paths are reserved for the subsequent decoding (i.e., path pruning). In order to evaluate the reliability of each path, a path metric (PM), denoted by $\operatorname{PM}^l_{i}$, was introduced in \cite{Balatsoukas2015LLR}, which is obtained by
\begin{equation}
	\begin{aligned}
		\operatorname{PM}^l_{i} = & \sum_{k=1}^{i}\ln(1+e^{-(1-2\hat{u}^l[k])\alpha^l[k]}) \\
		=& \operatorname{PM}^l_{i-1} + \ln(1+e^{-(1-2\hat{u}^l[i])\alpha^l[i]}),
	\end{aligned}
	\label{eq5}
\end{equation}
where the superscript $l$ indicates the $l$-th path, and $\ln(1+e^{-(1-2\hat{u}^l[i])\alpha^l[i]})$ can be viewed as the penalty caused by the mismatch between the estimation and hard-decision result of $u[i]$, given by
\begin{equation}
	\begin{aligned}
		&\ln(1+e^{-(1-2\hat{u}^l[i])\alpha^l[i]})\\
		&=\begin{cases} \ln(1+e^{-|\alpha^l[i]|}), & \textnormal{if $\hat{u}^l[i] = \operatorname{HD}(\alpha^{l}[i])$} \\ \ln(1+e^{|\alpha^l[i]|}), & \textnormal{otherwise}\end{cases}.
	\end{aligned}
	\label{eq6}
\end{equation}
Using the following hardware-friendly (HWF) approximation:
\begin{equation*}
	\begin{aligned}
		 \ln (1+e^a) = \begin{cases} a, & \textnormal{if $a > 0$} \\ 0, & \textnormal{otherwise}\end{cases},
	\end{aligned}
\end{equation*}
the calculation of PM can be simplified as \cite{Balatsoukas2015LLR}
\begin{equation}
	\begin{aligned}
		\operatorname{PM}^l_{i} 
		\approx& \begin{cases} \operatorname{PM}^l_{i-1}, & \textnormal{if $\hat{u}^l[i] = \operatorname{HD}(\alpha^{l}[i])$} \\ \operatorname{PM}^l_{i-1} + |\alpha^l[i]|, & \textnormal{otherwise}\end{cases}.
	\end{aligned}
	\label{eq7}
\end{equation}
Without loss of generalization, we assume that the candidate paths are sorted in ascending order in terms of PM values. Besides, for clarity, we refer to the $L$ paths before decoding a bit or a node as \emph{parent paths}, the paths generated by splitting the parent paths as \emph{candidate paths}, the $L$ paths with the smallest PMs amongst all candidate paths as \emph{reserved paths}, and all the remaining candidate paths that are eliminated as \emph{redundant paths}. Therefore, after decoding a bit or a node, the previously reserved paths are actually the parent paths for the subsequent decoding. In the following, we omit the superscript $l$ for brevity.
\vspace{-0.5em}
\subsection{Special Nodes}
\begin{table*}[ht]
	\caption{Structures of Different Special Nodes}
	\centering
	\begin{threeparttable}
		\begin{tabular}{c|c|c|c}\hline
			R0 & $\bm{c} = (0, 0, \ldots, 0)$ & R1 & $\bm{c} = (1, 1, \ldots, 1)$ \\ \hline
			REP & $\bm{c} = (0, \ldots, 0, 1)$ & SPC & $\bm{c} = (0, 1, \ldots, 1)$ \\ \hline
			Type-I & $\bm{c} = (0, \ldots, 0, 1, 1)$ & Type-II & $\bm{c} = (0, \ldots, 0, 1, 1, 1)$ \\ \hline
			Type-III & $\bm{c} = (0, 0, 1, \ldots, 1)$ & Type-IV & $\bm{c} = (0, 0, 0, 1, \ldots, 1)$ \\ \hline
			Type-V & $\bm{c} = (0, \ldots, 0, 1, 0, 1, 1, 1)$ & G-REP & $\bm{c} = (\overbrace{0,\ldots,0}^{N_{p-1}},\overbrace{0,\ldots,0}^{N_{p-2}},\ldots,\overbrace{0,\ldots,0}^{N_q} \llap{$\underbrace{\phantom{0,\ldots,0,0,\ldots,0,\ldots,0,\ldots,0}}_{\textnormal{R0}}$},\overbrace{\textrm{X},\ldots,\textrm{X}}^{N_q} \llap{$\underbrace{\phantom{\textrm{X},\ldots,\textrm{X}}}_{\textnormal{source node}}$})$ \\ \hline
			G-PC & $\bm{c} = (\overbrace{0,\ldots,0}^{N_q} \llap{$\underbrace{\phantom{0,\ldots,0}}_{\textnormal{R0}}$},\overbrace{1,\ldots,1}^{N_{q+1}},\overbrace{1,\ldots,1}^{N_{q+2}},\ldots,\overbrace{1,\ldots,1}^{N_{p-1}} \llap{$\underbrace{\phantom{1,\ldots,1,1,\ldots,1,\ldots,1,\ldots,1}}_{\textnormal{R1}}$})$ & EG-PC & $\bm{c} = (\overbrace{0,\ldots,0,\textrm{X}}^{N_q} \llap{$\underbrace{\phantom{0,\ldots,0,\textrm{X}}}_{\textnormal{R0 or REP}}$},\overbrace{1,\ldots,1}^{N_{q+1}},\overbrace{1,\ldots,1}^{N_{q+2}},\ldots,\overbrace{1,\ldots,1}^{N_{p-1}} \llap{$\underbrace{\phantom{1,\ldots,1,1,\ldots,1,\ldots,1,\ldots,1}}_{\textnormal{R1}}$})$ \\ \hline
			SR0/REP & $\bm{c} = (\overbrace{0,\ldots,0,\textrm{X}}^{N_{p-1}},\overbrace{0,\ldots,0,\textrm{X}}^{N_{p-2}},\ldots,\overbrace{0,\ldots,0,\textrm{X}}^{N_q} \llap{$\underbrace{\phantom{0,\ldots,0,\textrm{X},0,\ldots,0,\textrm{X},\ldots,0,\ldots,0,\textrm{X}}}_{\textnormal{R0 or REP}}$},\overbrace{\textrm{X},\ldots,\textrm{X}}^{N_q} \llap{$\underbrace{\phantom{\textrm{X},\ldots,\textrm{X}}}_{\textnormal{source node}}$})$ & SR1/SPC & $\bm{c} = (\overbrace{\textrm{X},\ldots,\textrm{X}}^{N_q} \llap{$\underbrace{\phantom{\textrm{X},\ldots,\textrm{X}}}_{\textnormal{source node}}$},\overbrace{\textrm{X},1,\ldots,1}^{N_{q+1}},\overbrace{\textrm{X},1,\ldots,1}^{N_{q+2}},\ldots,\overbrace{\textrm{X},1,\ldots,1}^{N_{p-1}} \llap{$\underbrace{\phantom{\textrm{X},1,\ldots,1,\textrm{X},1,\ldots,1,\ldots,\textrm{X},1,\ldots,1}}_{\textnormal{R1 or SPC}}$})$ \\ \hline
		\end{tabular}
		\begin{tablenotes}  
			\footnotesize
			\item[1] $\textrm{X}$ indicates either an information or a frozen bit. 
			\item[2] $N_s=2^s$, where $q \leq s < p$.
		\end{tablenotes} 
	\end{threeparttable}
	\label{tab1}
	\vspace{-1em}
\end{table*}

The sequential nature of SC-based decoding, i.e., each bit estimate depends on all previous ones, results in high decoding latency. However, some special nodes in the decoding tree have specific frozen and information bit patterns, which allows for directly obtaining the estimated codewords using tailored decoders. For clarity, Table~\ref{tab1} lists most of the existing special nodes \cite{Alamdar2011Simplified,Sarkis2014Fast,Hanif2017Fast,Condo2018Generalized,Zheng2021Threshold,Lu2023Fast} along with their structure descriptions.

\begin{figure*}[ht]
	\centering
	\includegraphics[width=0.75\textwidth]{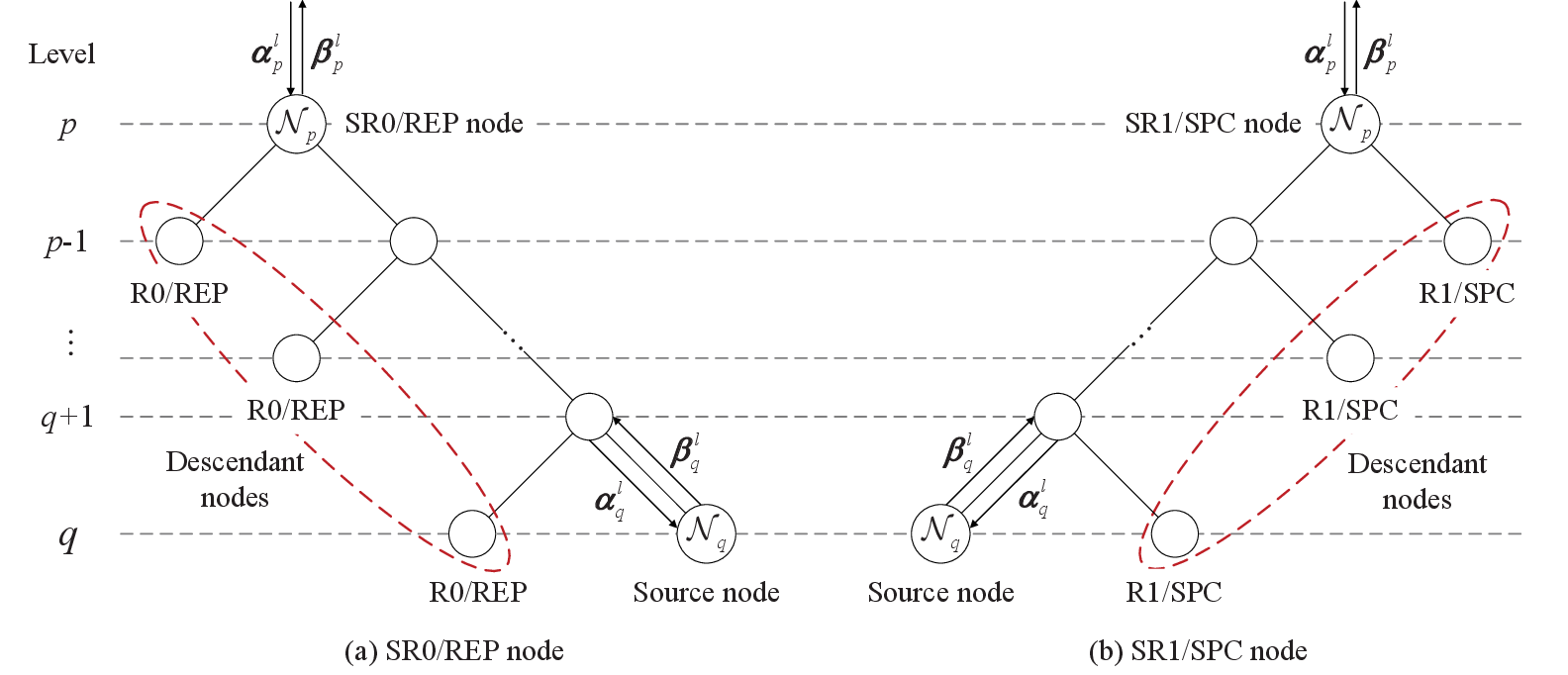}
	\caption{General tree structure of (a) SR0/REP node and (b) SR1/SPC node.}
	\label{fig:1}
	\vspace{-1em}
\end{figure*}

Recently, the sequence nodes, i.e., SR0/REP \cite{Zheng2021Threshold} and SR1/SPC \cite{Lu2023Fast}, whose tree structures are depicted in Fig.~\ref{fig:1}, are newly proposed as the most generalized special nodes so far. Given the root node level $p$, the SR0/REP node consists of a sequence of R0 or REP nodes (descendant nodes) with each located at level $q\leq s< p$, followed by a source node located at level $q$. Likewise, the SR1/SPC node can be interpreted as a node composed of a source node followed by a sequence of R1 or SPC nodes. In particular, the source node of the sequence nodes can be any generic nodes. As compared to the G-REP and G-PC nodes, the sequence nodes envelop more descendant node types and therefore provide more unified descriptions of polar constituent codes. It was shown in \cite{Zheng2021Threshold} and \cite{Lu2023Fast} that the SR0/REP and SR1/SPC nodes are more frequently distributed in low- and high-rate polar codes, respectively, which shows great potential in achieving a higher degree of decoding parallelism and simplifying the practical implementation of polar codes.

For clarity, we differentiate some special cases of the SR1/SPC node by using different notations. Let $\mathcal{L}$ and $\mathcal{L}^c=\{q,q+1,\ldots,p-1\}-\mathcal{L}$ denote two sets that respectively record the level indices of the descendant SPC and R1 nodes. If the descendant nodes are all R1 nodes, i.e., $\mathcal{L}=\emptyset$, then an SR1/SPC node reduces to a \emph{sequence R1 (SR1) node}. In other cases, an SR1/SPC node is referred to as a \emph{relaxed SR1 (RSR1) node}.

\subsection{Fast Decoding of Sequence Nodes}
\label{sec3.4}
Since the sequence nodes are highly-general polar constituent codes, investigating their efficient decoding algorithms is able to alleviate the high-latency problem, as well as simplify the practical implementation. In \cite{Zheng2021Threshold}, the SR0/REP node is decoded using a hybrid ML-SC decoding algorithm, which first exhaustively estimate the information bits in the descendant REP nodes, and then decode the source node using the conventional SC decoding. Therefore, the resulting latency reduction mainly stems from the saving of sequential LLR update and decoding procedures of the descendant REP nodes.

Although such ML-based methods are highly parallel, they are not applicable to the high-rate SR1/SPC nodes due to the extremely high computational complexity. Alternatively, an efficient fast SC decoding algorithm was presented in \cite{Lu2023Fast}, based on two types of parity constraints, i.e., the parallel parity constraints (P-PC) and the segmental parity constraints (S-PC). For an SR1/SPC node, its source node at level $q$ will impose the following $N_q$ P-PCs at the root node level $p$:
\begin{equation}
	\begin{aligned}
		\bigoplus\limits_{j=1}^{N_p/N_q} \beta_p[(j-1)N_q+k] = \beta_q[k],
	\end{aligned}
	\label{eq9}
\end{equation}
where $1 \leq k \leq N_q$. Specially, the bits $(\beta_p[(j-1)N_q+k])_{0\leq j \leq N_p/N_q}$ involved by the $k$-th P-PC can constitute an \emph{SPC subcode} with a special parity check $\beta_q[k]$. In addition, each descendant SPC node at level $r$ will impose an S-PC at the root node level, which is given by 
\begin{equation}
	\begin{aligned}
		\bigoplus\limits_{j=1}^{N_q/N_r/2} \bigoplus\limits_{k=1}^{N_r} \beta_{p}[(2j-1)N_r+k] = 0. 
	\end{aligned}
	\label{eq10}
\end{equation}


To ensure the validity of the output codeword, these parity constraints should be satisfied simultaneously, otherwise the final SC decoding output must be wrong due to the sequential decoding nature, which will degrade the decoding performance. Besides, the ML decoding rule indicates that the Euclid distance between the decoded codeword and the hard-decision codeword should be as small as possible \cite{Sarkis2013Increasing}. Following the aforementioned decoding rules, the fast decoding of SR1/SPC nodes can be divided into two stages. In the first stage, the P-PCs are corrected by temporarily ignoring the S-PCs, and in the second stage, the S-PCs are corrected without violating the P-PCs, such that all the parity check constraints are satisfied simultaneously. Specifically, Wagner decoding \cite{Silverman1954Coding} is employed to correct the P-PCs and a pre-determined flip coordinate set (FCS) is further presented to correct the S-PCs, such that a list of candidate codewords can be obtained. Furthermore, a penalty metric based on the ML rule is
introduced to measure the reliability of each candidate codeword, with which the least penalised codeword is selected as the decoding output. 

In this paper, we advance the above works by proposing fast list decoding algorithms for high-rate polar constituent codes, i.e., SPC and SR1/SPC nodes, which will be detailed in the following.

\section{Fast List Decoding of SPC Nodes}
\label{sec3}
For high-rate special nodes, it is impractical in terms of computational complexity to exhaust the whole search space of candidate paths, i.e., estimate each information bit and compare all candidate paths to find a list of optimal paths. Alternatively, the mainstream approach is to sequentially perform path splitting and generate candidate paths by flipping bits in the parent paths \cite{Hashemi2016Fast,Hashemi2017Fast,Ardakani2019Fast}. In each time step of such a sequential procedure, the unreliable candidate paths are eliminated from further path splitting, which narrows the search space significantly. In \cite{Zhao2021Minimum}, it is proved that the search space of R1 nodes can be further narrowed down to the range of a MCS. Specifically, by excluding all flipping bit index combinations (flipping combinations) that certainly lead to redundant candidate paths, the remaining combinations (minimum-combinations) in the whole search space constitute the MCS. In essence, the final reserved paths of R1 nodes must be included in the MCS. Therefore, the FSL decoding of R1 nodes can be accelerated significantly by directly selecting the most reliable paths from the candidate paths generated based on the MCS, leading to the FPL decoder.

However, for other special nodes with frozen bits, the structure of the search space is more complicated, since each frozen bit will pose a parity constraint on the codeword. Therefore, the minimum-combinations of R1 nodes are not applicable to other special nodes. For instance, the bits in SPC nodes need to keep an even parity check, while inappropriate flipping combinations may violate the parity constraint. 

In this section, we show how to construct the MCS for SPC nodes, based on which an SPC FPL decoder is presented to achieve lower decoding latency.

\subsection{MCS for SPC Nodes}
First, we introduce some necessary notations for further illustration. Specifically, the bits of an SPC node are sorted according to their reliability, i.e., the absolute values of the LLRs $|\alpha_p[i]|$, with sorted indices $(i)_{|\alpha|}$, such that $|\alpha_p[(1)_{|\alpha|}]|\leq|\alpha_p[(2)_{|\alpha|}]|\leq\ldots\leq|\alpha_p[(N_p)_{|\alpha|}]|$. Given a flipping bit set, the flipping combination $\mathcal{F}$ records the corresponding bit indices. For example, when $\mathcal{F} = \{2,3\}$, the flipping bits are actually the bits $\{\beta_p[(2)_{|\alpha|}],\beta_p[(3)_{|\alpha|}]\}$. Then, the whole search space of R1 nodes, denoted by $\mathcal{F}_{\textnormal{all}}$, is defined as a set composed of all flipping combinations, i.e., $\mathcal{F}_{\textnormal{all}}=\{\mathcal{F}|\mathcal{F} \subseteq \{1,2,\ldots,N_p\}\}$. Moreover, denote $\operatorname{num}(\mathcal{F})$ as a function that counts the number of candidate paths which are more reliable than the candidate path associated with $\mathcal{F}$, then a flipping combination $\mathcal{F}'$ satisfying $\operatorname{num}(\mathcal{F}') \geq L$ indicates that the corresponding candidate path is redundant and can be safely eliminated. Based on the theorems introduced in \cite{Zhao2021Minimum}, the reliability of flipping combinations is compared in groups of two. As such, all reliable minimum-combinations are collected in a set $\mathcal{C}$ which is defined as the MCS, i.e., $\mathcal{C}=\{\mathcal{F}|\operatorname{num}(\mathcal{F}) < L, \mathcal{F}\in\mathcal{F}_{\textnormal{all}}\}$.

For SPC nodes, the XOR results of all bits need to keep an even parity check, i.e., $\bigoplus\nolimits_{i=1}^{N_p}\beta_p[i] = 0$, where $\beta_p[i]=\operatorname{HD}(\alpha_p[i])$. By taking this special parity constraint into consideration, we construct the MCS for SPC nodes as follows. Define the initial parity check, denoted by $\gamma$, as
\begin{equation}
	\begin{aligned}
		\gamma \triangleq \bigoplus\limits_{i=1}^{N_p}\beta_p[i],
	\end{aligned}
	\label{eq11}
\end{equation}
then the flipping rule should depend on the value of $\gamma$. For $\gamma=0$ and $\gamma=1$, the number of flipping bits, i.e., the size of flipping combinations $|\mathcal{F}|$, must be even and odd, respectively. Accordingly, the corresponding search spaces of SPC nodes can be expressed as
\begin{equation}
	\begin{gathered}
		\mathcal{F}_{\textnormal{all}}^{\textnormal{even}}=\{\mathcal{F} | \mathcal{F} \in \mathcal{F}_{\textnormal{all}}, |\mathcal{F}|=2i\}, \\		\mathcal{F}_{\textnormal{all}}^{\textnormal{odd}}=\{\mathcal{F} | \mathcal{F} \in \mathcal{F}_{\textnormal{all}}, |\mathcal{F}|=2i+1\},
	\end{gathered}
	\label{eq12}
\end{equation}
where $i \in \mathcal{N}$. We can further narrow down the search space by restricting the maximum size of flipping combinations, using the following theorem. 

\begin{table*}[ht]
	\caption{MCS $\mathcal{C}$ for Different List Sizes}
	\centering
	\begin{tabular}{|c|c|c|c|c|}\hline
		\multirow{2}{*}{$L$} & \multicolumn{2}{c|}{$\mathcal{C}^{\textnormal{SPC}}$ for SPC} & \multirow{2}{*}{$\mathcal{C}^{\textnormal{R1}}$ for R1 \cite{Zhao2021Minimum}} & \multirow{2}{*}{$|\mathcal{C}|$} \\ \cline{2-3} & $\gamma=0$ & $\gamma=1$ & & \\ \hline

		2 & $\{\emptyset,\{1,2\}\}$ & $\{\{1\},\{2\}\}$ & $\{\emptyset,\{1\}\}$ & 2 \\ \hline
		
		4 & \tabincell{c}{$\{\emptyset,\{1,2\},\{1,3\},\{1,4\},$\\$\{2,3\}\}$} & \tabincell{c}{$\{\{1\},\{2\},\{3\},\{4\},$\\$\{1,2,3\}\}$} & \tabincell{c}{\{$\emptyset,\{1\},\{2\},\{3\},$\\$\{1,2\}\}$} & 5 \\ \hline
		
		8 & \tabincell{c}{$\{\emptyset,\{1,2\},\{1,3\},\{1,4\},$\\$\{1,5\},\{1,6\},\{1,7\},\{1,8\},$\\$\{2,3\},\{2,4\},\{2,5\},$\\$\{3,4\},\{1,2,3,4\}\}$} & \tabincell{c}{$\{\{1\},\{2\},\{3\},\{4\},$\\$\{5\},\{6\},\{7\},\{8\},$\\$\{1,2,3\},\{1,2,4\},\{1,2,5\},$\\$\{1,3,4\},\{2,3,4\}\}$} & \tabincell{c}{$\{\emptyset,\{1\},\{2\},\{3\},$\\$\{4\},\{5\},\{6\},\{7\},$\\$\{1,2\},\{1,3\},\{1,4\},$\\$\{2,3\},\{1,2,3\}\}$} & 13 \\ \hline
		
		16 & \tabincell{c}{$\{\emptyset,\{1,2\},\{1,3\},\{1,4\},$\\$\{1,5\},\{1,6\},\{1,7\},\{1,8\},$\\$\{1,9\},\{1,10\},\{1,11\},\{1,12\},$\\$\{1,13\},\{1,14\},\{1,15\},\{1,16\},$\\$\{2,3\},\{2,4\},\{2,5\},\{2,6\},$\\$\{2,7\},\{2,8\},\{2,9\},$\\$\{3,4\},\{3,5\},\{3,6\},\{3,7\},$\\$\{4,5\},\{4,6\},\{5,6\},$\\$\{1,2,3,4\},\{1,2,3,5\},\{1,2,3,6\},$\\$\{1,2,4,5\},\{1,3,4,5\},\{2,3,4,5\}\}$} & \tabincell{c}{$\{\{1\},\{2\},\{3\},\{4\},$\\$\{5\},\{6\},\{7\},\{8\},$\\$\{9\},\{10\},\{11\},\{12\},$\\$\{13\},\{14\},\{15\},\{16\},$\\$\{1,2,3\},\{1,2,4\},\{1,2,5\},\{1,2,6\},$\\$\{1,2,7\},\{1,2,8\},\{1,2,9\},$\\$\{1,3,4\},\{1,3,5\},\{1,3,6\},\{1,3,7\},$\\$\{1,4,5\},\{1,4,6\},\{1,5,6\},$\\$\{2,3,4\},\{2,3,5\},\{2,3,6\},$\\$\{2,4,5\},\{3,4,5\},\{1,2,3,4,5\}\}$} & \tabincell{c}{$\{\emptyset,\{1\},\{2\},\{3\},$\\$\{4\},\{5\},\{6\},\{7\},$\\$\{8\},\{9\},\{10\},\{11\},$\\$\{12\},\{13\},\{14\},\{15\},$\\$\{1,2\},\{1,3\},\{1,4\},\{1,5\},$\\$\{1,6\},\{1,7\},\{1,8\},$\\$\{2,3\},\{2,4\},\{2,5\},\{2,6\},$\\$\{3,4\},\{3,5\},\{4,5\},$\\$\{1,2,3\},\{1,2,4\},\{1,2,5\},$\\$\{1,3,4\},\{2,3,4\},\{1,2,3,4\}\}$} & 36 \\ \hline
	\end{tabular}
	\label{tab2}
	\vspace{-1em}
\end{table*}

\begin{theorem}
	For $\gamma=0$ and $\mathcal{F} \in \mathcal{F}_{\textnormal{all}}^{\textnormal{even}}$, if $|\mathcal{F}| \geq 2m$, where $m = \lceil(\log_2L+1)/2\rceil$, then $\operatorname{num}(\mathcal{F}) \geq L$ holds. On the other hand, for $\gamma=1$ and $\mathcal{F} \in \mathcal{F}_{\textnormal{all}}^{\textnormal{odd}}$, if $|\mathcal{F}| \geq 2m+1$, where $m = \lceil\log_2L/2\rceil$, then we have  $\operatorname{num}(\mathcal{F}) \geq L$.
	\label{th1}
\end{theorem}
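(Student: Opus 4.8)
The plan is to bound $\operatorname{num}(\mathcal{F})$ from below by exhibiting a large family of flipping combinations that are guaranteed to be more reliable than the one associated with $\mathcal{F}$, and then to show that this family already contains at least $L$ admissible members once $|\mathcal{F}|$ reaches the stated threshold. Recall from \eqref{eq7} that, after the bits are sorted so that $|\alpha_p[(1)_{|\alpha|}]|\le\cdots\le|\alpha_p[(N_p)_{|\alpha|}]|$, the metric paid by a flipping combination is the penalty $\sum_{i\in\mathcal{F}}|\alpha_p[(i)_{|\alpha|}]|$, and a smaller penalty means a more reliable path. The key monotonicity observation is that these penalties are nonnegative, so for any sub-combination $\mathcal{G}\subsetneq\mathcal{F}$ one has $\sum_{i\in\mathcal{G}}|\alpha_p[(i)_{|\alpha|}]|\le\sum_{i\in\mathcal{F}}|\alpha_p[(i)_{|\alpha|}]|$, with strict inequality whenever the discarded bits carry positive LLR magnitude. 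Hence every proper sub-combination of $\mathcal{F}$ is (generically) more reliable than $\mathcal{F}$ itself, and this holds for every admissible LLR ordering, which is exactly what the offline MCS construction requires.

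First I would restrict attention to sub-combinations that respect the parity constraint, since only these are admissible candidate paths. For $\gamma=0$ the admissible combinations are the even-sized ones in $\mathcal{F}_{\textnormal{all}}^{\textnormal{even}}$; a set of size $2m$ has exactly $2^{2m-1}$ even-sized subsets, so excluding $\mathcal{F}$ itself leaves $2^{2m-1}-1$ proper even sub-combinations, each strictly more reliable than $\mathcal{F}$. This gives the uniform lower bound $\operatorname{num}(\mathcal{F})\ge 2^{2m-1}-1$ whenever $|\mathcal{F}|=2m$, and since this bound is nondecreasing in $|\mathcal{F}|$, the same conclusion extends to all $|\mathcal{F}|\ge 2m$. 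Substituting $m=\lceil(\log_2 L+1)/2\rceil$ and verifying $2^{2m-1}-1\ge L$ then yields $\operatorname{num}(\mathcal{F})\ge L$. The odd case $\gamma=1$ is symmetric: a $(2m+1)$-element set has $2^{2m}$ odd-sized subsets, hence $2^{2m}-1$ proper odd sub-combinations, giving $\operatorname{num}(\mathcal{F})\ge 2^{2m}-1$, and substituting $m=\lceil\log_2 L/2\rceil$ closes the argument. Throughout, I would invoke the pairwise (``groups of two'') reliability-comparison theorems of \cite{Zhao2021Minimum} to certify that each of these dominance relations holds for every admissible LLR assignment rather than for a single instance.

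The step I expect to be the main obstacle is the final boundary bookkeeping, i.e.\ verifying $2^{2m-1}-1\ge L$ (respectively $2^{2m}-1\ge L$) exactly, since the outcome hinges on how the ceiling in the definition of $m$ interacts with the ``$-1$'' coming from excluding $\mathcal{F}$ itself and with whether $\log_2 L$ is an integer power boundary. A related subtlety that must be addressed is tightness: the subset bound is attained with equality by the extreme LLR configuration in which the bits indexed by $\{1,\dots,|\mathcal{F}|\}$ have vanishing magnitude while all others are very large. In that configuration $\mathcal{F}=\{1,\dots,2m\}$ is beaten by \emph{exactly} its proper even sub-combinations and by nothing else, so the count $2^{2m-1}-1$ cannot be improved in general. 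This both confirms that no larger uniform lower bound is available and forces the threshold on $|\mathcal{F}|$ to be set so that $2^{|\mathcal{F}|-1}-1$ just surpasses $L$; pinning down this constant precisely, and reconciling it with the tabulated MCS in Table~\ref{tab2}, is the delicate part of the proof, whereas the parity-restricted counting and the dominance machinery are routine once the even/odd structure is imposed.
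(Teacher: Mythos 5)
Your proposal follows exactly the same route as the paper's proof: restrict attention to parity-respecting subsets of $\mathcal{F}$, count them ($2^{2m-1}$ even subsets of a $2m$-set, $2^{2m}$ odd subsets of a $(2m+1)$-set), observe that every sub-combination carries a smaller penalty under any admissible LLR ordering, and compare the count with $L$. The one difference is that you do the bookkeeping honestly: you exclude $\mathcal{F}$ from its own dominating family and arrive at $\operatorname{num}(\mathcal{F})\ge 2^{2m-1}-1$ (resp.\ $2^{2m}-1$), whereas the paper simply writes $\operatorname{num}(\mathcal{F})=2^{2m-1}$ (resp.\ $2^{2m}$), i.e.\ it counts $\mathcal{F}$ as a path more reliable than itself.

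The ``boundary bookkeeping'' obstacle you flag is therefore not a gap in your reasoning but a genuine off-by-one defect in the statement itself, which the paper's proof papers over with that miscount. With the stated $m=\lceil(\log_2L+1)/2\rceil$ one only gets $2^{2m-1}\ge L$, and the inequality you actually need, $2^{2m-1}-1\ge L$, fails precisely when $2^{2m-1}=L$, i.e.\ when $\log_2L$ is odd ($L=2,8,\dots$); the odd case likewise fails when $2^{2m}=L$, i.e.\ when $\log_2L$ is even ($L=4,16,\dots$). Your tightness construction (vanishing LLR magnitudes on the first $|\mathcal{F}|$ positions, large elsewhere) shows the count $2^{2m-1}-1$ cannot be improved, so in those cases a combination of size exactly $2m$ (resp.\ $2m+1$) is genuinely non-redundant. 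The paper's own Table~\ref{tab2} confirms this: $\{1,2\}$ for $L=2$ and $\{1,2,3,4\}$ for $L=8$ appear in $\mathcal{C}^{\textnormal{SPC}}$ under $\gamma=0$, and $\{1,2,3\}$ for $L=4$ and $\{1,2,3,4,5\}$ for $L=16$ under $\gamma=1$ --- exactly the boundary-size combinations the theorem declares redundant. In short, your proposal is the paper's argument carried out correctly, and carried out correctly it establishes only the weaker (and correct) claim with $|\mathcal{F}|\ge 2m+2$ (resp.\ $\ge 2m+3$), or equivalently with $L+1$ in place of $L$ inside the ceilings; the theorem as printed cannot be proved because it is false at the boundary for those list sizes.
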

\begin{proof}
	For a flipping combination $\mathcal{F} \in \mathcal{F}_{\textnormal{all}}^{\textnormal{even}}$, it has $\sum_{i=0}^{m}\begin{pmatrix}\begin{smallmatrix} 2i \\ 2m \end{smallmatrix}\end{pmatrix}=2^{2m-1}$ subsets, which means that $\operatorname{num}(\mathcal{F})=2^{2m-1} \geq L$ is satisfied when $m = \lceil(\log_2L+1)/2\rceil$. Likewise, for $\mathcal{F} \in \mathcal{F}_{\textnormal{all}}^{\textnormal{odd}}$, the number of its subsets is $\sum_{i=0}^{m}\begin{pmatrix}\begin{smallmatrix}	2i+1 \\ 2m+1 \end{smallmatrix}\end{pmatrix}=2^{2m}$. Therefore, it holds that $\operatorname{num}(\mathcal{F})=2^{2m} \geq L$ when $m = \lceil\log_2L/2\rceil$. This completes the proof.
\end{proof}

\begin{algorithm}[t]
	\label{alg1}
	\caption{Offline Generation of the MCS for SPC Nodes under $\gamma=0$}
	\LinesNumbered
	\KwIn{$L$}
	\KwOut{$\mathcal{C}$}
	
	$\mathcal{C}=\{\emptyset\}$\;
	\ForEach{$\mathcal{F} \in \mathcal{F}_{\textnormal{all}}^{\textnormal{even}}$}{
		$\operatorname{num}(\mathcal{F})=0$\;
	}
	\For{$i = 1 \to m-1$}{
		$\mathcal{A}=\{\mathcal{F}|\mathcal{F} \in \mathcal{F}_{\textnormal{all}}^{\textnormal{even}},|\mathcal{F}|=2i\}$\;
		\ForEach{$\mathcal{E}_{\mathcal{A}} \in \mathcal{A}$}{
			$\mathcal{B}=\{\mathcal{F}|\mathcal{F} \in \mathcal{F}_{\textnormal{all}}^{\textnormal{even}},|\mathcal{F}| < 2m\}-\mathcal{E}_{\mathcal{A}}$\;
			\ForEach{$\mathcal{E}_{\mathcal{B}} \in \mathcal{T}$}{
				\If{$\mathcal{E}_{\mathcal{A}}$ is less reliable than $\mathcal{E}_{\mathcal{B}}$}{
					\If{$\operatorname{num}(\mathcal{E}_{\mathcal{B}}) \geq L-1$}{
						$\operatorname{num}(\mathcal{E}_{\mathcal{A}})=L$\;
						break\;
					}
					\Else{
						$\operatorname{num}(\mathcal{E}_{\mathcal{A}})=\operatorname{num}(\mathcal{E}_{\mathcal{A}})+1$\;
					}
				}
			}
			\If{$\operatorname{num}(\mathcal{E}_{\mathcal{A}})<L$}{
				$\mathcal{C}=\mathcal{C} \cup \{\mathcal{E}_{\mathcal{A}}\}$\;
			}
		}
	}
	
	\Return $\mathcal{C}$
\end{algorithm}

Then, by combining Theorem~\ref{th1} and Theorems~1, 2, 4 and 5 introduced in \cite{Zhao2021Minimum}, we present in Algorithm~\ref{alg1} the proposed offline MCS generation procedure for SPC nodes under $\gamma=0$. For brevity, we omit the detailed MCS generation algorithm under $\gamma=1$ due to its similarity with Algorithm~\ref{alg1}. Alternatively, a more convenient MCS generation approach for $\gamma=1$ is to directly utilize the results obtained from the case of $\gamma=0$, by taking advantage of the following theorem.

\begin{theorem}
	For any $\mathcal{F}$ in $\mathcal{C}$ associated with $\gamma=0$, define a new flipping combination as $\mathcal{F}' \triangleq \mathcal{F} \triangle \{1\}$, then $\mathcal{F}'$ is included in $\mathcal{C}$ associated with $\gamma=1$.
	\label{th2}
\end{theorem}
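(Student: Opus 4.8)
The plan is to show that the symmetric-difference map $\phi(\mathcal{F}) \triangleq \mathcal{F} \triangle \{1\}$ is a reliability-order-preserving bijection from the even search space $\mathcal{F}_{\textnormal{all}}^{\textnormal{even}}$ (associated with $\gamma=0$) onto the odd search space $\mathcal{F}_{\textnormal{all}}^{\textnormal{odd}}$ (associated with $\gamma=1$), so that $\operatorname{num}(\cdot)$ is left invariant. First I would observe that $\phi$ merely toggles the membership of the least-reliable index $1$: it deletes $1$ when $1 \in \mathcal{F}$ and inserts it otherwise. Since this flips the parity of $|\mathcal{F}|$, it sends even combinations to odd ones and is its own inverse, hence a bijection $\mathcal{F}_{\textnormal{all}}^{\textnormal{even}} \to \mathcal{F}_{\textnormal{all}}^{\textnormal{odd}}$; in particular $\mathcal{F}' = \phi(\mathcal{F})$ already lies in the correct ($\gamma=1$) search space.

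The heart of the argument is to recast the reliability comparison underlying $\operatorname{num}(\cdot)$ in a form that interacts cleanly with $\phi$. Writing $a_i \triangleq |\alpha_p[(i)_{|\alpha|}]|$ so that $a_1 \leq a_2 \leq \ldots \leq a_{N_p}$, the penalty of a combination is $\sum_{i \in \mathcal{F}} a_i$, and one combination is guaranteed at least as reliable as another (for every admissible sorted LLR profile) precisely when its penalty is no larger for all such profiles. Using Abel summation I would rewrite $\sum_{i \in \mathcal{F}} a_i = \sum_{t \geq 1}(a_t - a_{t-1})\, S_{\mathcal{F}}(t)$ with $a_0 \triangleq 0$ and the suffix count $S_{\mathcal{F}}(t) \triangleq |\{i \in \mathcal{F} : i \geq t\}|$; since every increment $a_t - a_{t-1}$ is nonnegative and otherwise free, this yields the key characterization that $\mathcal{G}$ is guaranteed at least as reliable as $\mathcal{F}$ if and only if $S_{\mathcal{G}}(t) \leq S_{\mathcal{F}}(t)$ for all $t$.

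Next I would verify that $\phi$ preserves this dominance order. Because index $1$ is the smallest, toggling its membership changes only $S_{\mathcal{F}}(1) = |\mathcal{F}|$ while leaving $S_{\mathcal{F}}(t)$ for $t \geq 2$ untouched; thus the inequalities at all thresholds $t \geq 2$ transfer verbatim from $\mathcal{F}, \mathcal{G}$ to $\phi(\mathcal{F}), \phi(\mathcal{G})$. It then remains to control the threshold $t=1$, i.e., to show $|\phi(\mathcal{G})| \leq |\phi(\mathcal{F})|$ whenever $|\mathcal{G}| \leq |\mathcal{F}|$ with both even. When $|\mathcal{G}| \leq |\mathcal{F}| - 2$ this is immediate since $\phi$ changes size by only $\pm 1$; the delicate case is $|\mathcal{G}| = |\mathcal{F}|$, where I would invoke the $t=2$ inequality $S_{\mathcal{G}}(2) \leq S_{\mathcal{F}}(2)$ to prove that $1 \in \mathcal{F}$ forces $1 \in \mathcal{G}$ (otherwise $\mathcal{G}$ would place all of its equally-many elements at indices $\geq 2$, giving $S_{\mathcal{G}}(2) = |\mathcal{G}| > |\mathcal{F}| - 1 = S_{\mathcal{F}}(2)$, a contradiction), so $\phi$ cannot enlarge $\mathcal{G}$ while shrinking $\mathcal{F}$. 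Applying the same reasoning to the involution in reverse gives the converse, establishing that $\phi$ is an order isomorphism.

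Finally I would conclude by counting: since $\phi$ is an order-preserving bijection between the two search spaces, it maps the set of even combinations more reliable than $\mathcal{F}$ bijectively onto the set of odd combinations more reliable than $\mathcal{F}'$, whence $\operatorname{num}(\mathcal{F}') = \operatorname{num}(\mathcal{F})$. Consequently $\mathcal{F} \in \mathcal{C}$ for $\gamma=0$, which means $\operatorname{num}(\mathcal{F}) < L$, implies $\operatorname{num}(\mathcal{F}') < L$, i.e., $\mathcal{F}' \in \mathcal{C}$ for $\gamma=1$, as claimed. I expect the main obstacle to be the $t=1$ equal-size case of the order-preservation step, together with confirming that the pairwise reliability comparison of \cite{Zhao2021Minimum} defining $\operatorname{num}(\cdot)$ indeed coincides with the suffix-count dominance order and that any tie-breaking is respected by $\phi$; once that is secured, the counting argument is routine.
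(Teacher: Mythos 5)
Your proof is correct, but it takes a genuinely different route from the paper's, and it is considerably more rigorous. The paper's proof is a short operational reduction: for $\gamma=1$, Wagner decoding must first flip the least reliable bit (the combination $\{1\}$), which turns the node into a valid one with $\gamma=0$; any $\mathcal{F}$ from the $\gamma=0$ MCS can then be applied on top, and since flipping the same bit twice cancels, the composite flip is exactly $\mathcal{F} \triangle \{1\}$. What the paper leaves entirely implicit is the point you work hardest on: that this correspondence preserves the reliability ranking, i.e., that $\operatorname{num}(\mathcal{F} \triangle \{1\}) < L$ actually follows from $\operatorname{num}(\mathcal{F}) < L$ (note that the penalty changes by $+|\alpha_p[(1)_{|\alpha|}]|$ or $-|\alpha_p[(1)_{|\alpha|}]|$ depending on whether $1 \in \mathcal{F}$, so this is not a mere translation of penalties and does require an argument). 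Your suffix-count (Abel summation) characterization of guaranteed dominance, together with the order-isomorphism verification --- including the delicate equal-cardinality case at threshold $t=1$, resolved via the $t=2$ inequality --- supplies exactly this missing step, and in fact proves the stronger statement $\operatorname{num}(\mathcal{F}') = \operatorname{num}(\mathcal{F})$. A pleasant by-product is that, because your map $\phi$ is an involution, the same argument simultaneously establishes the converse direction (constructing the $\gamma=0$ MCS from the $\gamma=1$ one), which the paper only asserts without proof in the remark following the theorem. Your closing caveat --- whether the pairwise comparisons of \cite{Zhao2021Minimum} defining $\operatorname{num}(\cdot)$ coincide with suffix-count dominance --- is the right thing to flag but is not a real gap: since the MCS must be generated offline, independently of the LLR realization, ``guaranteed more reliable for every admissible sorted profile'' is the only sound reading, and that is precisely the dominance order you characterized. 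In short, the paper's proof buys brevity and decoding intuition; yours buys a complete combinatorial justification of the reliability-count invariance that the paper takes for granted.
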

\begin{proof}
	For an SPC node with $\gamma = 1$, the ML solution is to flip the least reliable bit, i.e., the bit indicated by the flipping combination $\{1\}$. After flipping this bit, the SPC node is now valid with $\gamma = 0$, and in this case the flipping bits from $\mathcal{F} \in \mathcal{C}$ can keep an even parity check. Therefore, $\mathcal{F}'$ can be derived by flipping the bits indicated by $\{1\}$ and $\mathcal{F} \in \mathcal{C}$. This thus completes the proof.
	\vspace{-0.5em}
\end{proof}

Note that the above construction process is also conversely valid, i.e., the MCS associated with $\gamma=0$ can also be constructed based on the results from the case of $\gamma=1$ using the same transformation as introduced in Theorem~\ref{th2}. Therefore, to derive the MCSes for both cases of $\gamma=0$ and $\gamma=1$, we only need to perform Algorithm~\ref{alg1} once.

For convenience, we list in Table~\ref{tab2} the MCS results under $L=\{2,4,8,16\}$, where $\mathcal{C}^{\textnormal{SPC}}$ and $\mathcal{C}^{\textnormal{R1}}$ denote the MCSes for SPC and R1 nodes, respectively.

\subsection{FPL Decoding} 
\label{sec3.2}
With the help of MCS, the sequential decoding of SPC nodes introduced in \cite{Hashemi2017Fast} can now be parallelized. First, we calculate the value of $\gamma$ (using \eqref{eq11}), based on which we select the corresponding MCS $\mathcal{C}^{\textnormal{SPC}}$ (has already been determined before the decoding process). Then, each parent path will be directly split into several candidate paths by flipping the bits indicated by $\mathcal{F}\in\mathcal{C}^{\textnormal{SPC}}$. Let $\sigma_{\mathcal{F}}$ denote the combination of $|\alpha_p[i]|$ when using $\mathcal{F}$, i.e., the accumulated penalty in PM caused by bit-flipping, which is shown as follows:
\begin{equation}
	\begin{aligned}
		\sigma_{\mathcal{F}} = \sum_{i \in \mathcal{F}}|\alpha_p[(i)_{|\alpha|}]|.
	\end{aligned}
	\label{eq13}
\end{equation}
Accordingly, the PM of the candidate path associated with $\mathcal{F}$ can be obtained by
\begin{equation}
	\begin{aligned}
		\operatorname{PM}_{\mathcal{F}} =& \sum_{i=1}^{N_p}\ln(1+e^{-|\alpha_p[i]|}) + \sigma_{\mathcal{F}}\\
		\approx&\sigma_{\mathcal{F}}, \qquad \textnormal{(HWF)}.
	\end{aligned}
	\label{eq14}
\end{equation}
Finally, through a sorting procedure $|\mathcal{C}| L \to L$, all candidate paths are compared and the $L$ paths with the smallest PMs are reserved. 

\section{Fast List Decoding of SR1/SPC Nodes}
\label{sec4}
In this section, we present a two-stage fast list decoding algorithms for SR1/SPC nodes. In stage I, we propose to decode the source node to correct the P-PCs, while in stage II, the S-PCs are corrected without violating the P-PCs and we present FSL and FPL decoding algorithms for two subtypes of SR1/SPC nodes, i.e., the SR1 and RSR1 nodes.

\subsection{Overview}
\label{sec4.1}
As described in Section~\ref{sec3.4}, the decoded codeword of the SR1/SPC node should satisfy all the P-PCs and S-PCs simultaneously in order to avoid decoding performance degradation. This is also tenable for list decoding, i.e., the codewords of all paths should also satisfy the P-PCs and S-PCs. Therefore, the principle of the proposed fast list decoding algorithm is to produce the $L$ most reliable paths by splitting the previously reserved paths at the node level, while guaranteeing their validity. Similar to our previous work on fast SC decoding \cite{Lu2023Fast}, fast list decoding of the SR1/SPC node can also be divided into two stages, where the P-PCs and S-PCs are corrected successively. For each candidate path, the decoding in stage I temporarily ignores the S-PCs and corrects the P-PCs by decoding the source node, while the decoding in stage II corrects the S-PCs by keeping the already satisfied P-PCs unchanged, such that the final decoding output is valid. Meanwhile, only the $L$ paths with the smallest PMs are retained in each decoding stage. 

\begin{figure*}[ht]
	\centering
	\includegraphics[width=0.8\textwidth]{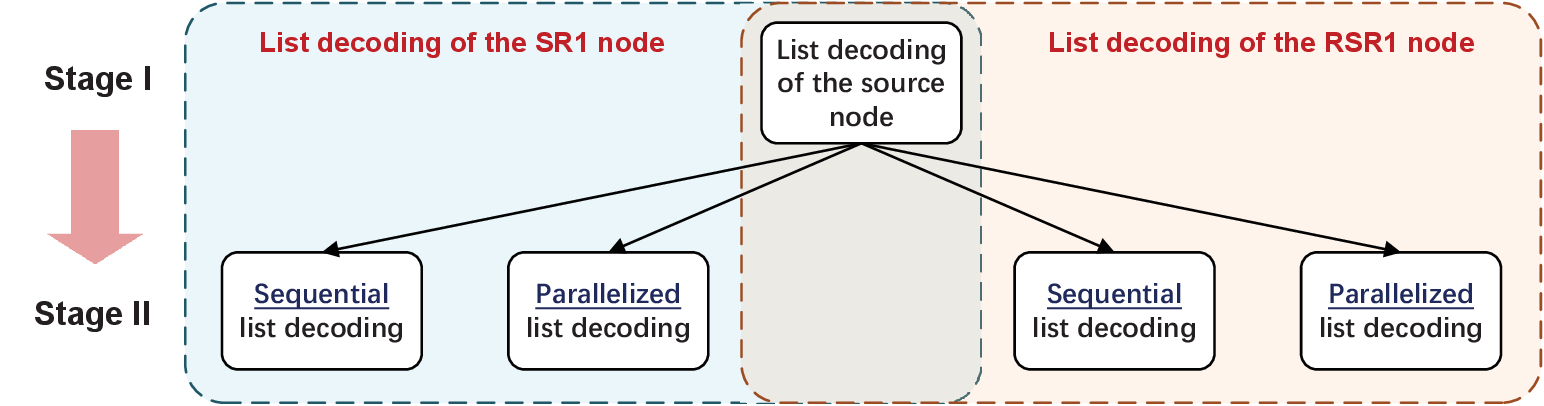}
	\caption{Overview of the proposed list decoding algorithms for SR1/SPC nodes.}
	\label{fig:2}
	\vspace{-1em}
\end{figure*} 

In Fig.~\ref{fig:2}, we depict an overview of the proposed fast list decoding algorithms for SR1/SPC nodes. In particular, considering the trade-off between complexity and latency, we present both FSL and FPL decoding algorithms. These two list decoding algorithms share the same stage I decoding process but vary in stage II. The FSL decoder follows a sequential procedure, which exhibits lower complexity and is more hardware-friendly for practical implementation. In contrast to this, the FPL decoder is highly parallel such that considerable latency reduction can be achieved, but at the cost of increased computational complexity. Besides, since the two subtypes of SR1/SPC nodes, i.e., SR1 and RSR1 nodes, contain different parity checks, the proposed FSL and FPL decoding algorithms are also different. To summarize, one universal stage I and four different stage II decoding algorithms will be introduced in the following.

\subsection{Stage I Decoding}
\label{sec4.2}
First, the hard-decision codeword $\beta_{p}$ needs to be modified to satisfy the P-PCs that are dependent on the codeword of the source node $\beta_{q}$, according to \eqref{eq9}. To do so, we determine the parity check for each SPC subcode, denoted by $\gamma^k_{\textrm{P-PC}}$, as
\begin{equation}
	\begin{aligned}
		\gamma^k_{\textrm{P-PC}} = \bigoplus\limits_{j=1}^{N_p/N_q} \beta_{p}[(j-1)N_q+k] \oplus \beta_{q}[k],
	\end{aligned}
	\label{eq15}
\end{equation}
where the superscript $k$ represents the SPC subcode index introduced in Section~\ref{sec3.4}. Then, each P-PC can be corrected by flipping the least reliable bit with index $(1)^k_{|\alpha|}=\arg\min_{1\leq j \leq N_p/N_q}|\alpha_{p}[(j-1)N_{q}+k]|$ using Wagner decoding \cite{Silverman1954Coding}, where $(j)^k_{|\alpha|}$ is the sorted index which indicates the $j$-th smallest absolute LLR in the $k$-th SPC subcode. Accordingly, the PM of each path is calculated as
\begin{equation}
	\begin{aligned}
		\operatorname{PM}_p =& \sum_{i=1}^{N_p}\ln(1+e^{-|\alpha_p[i]|}) \\
		&+\sum_{k=1}^{N_q}\gamma^k_{\textrm{P-PC}}|\alpha_p[((1)^k_{|\alpha|}-1)N_q+k]|\\
		\approx&\sum_{k=1}^{N_q}\gamma^k_{\textrm{P-PC}}|\alpha_p[((1)^k_{|\alpha|}-1)N_q+k]|, \quad \textnormal{(HWF)}.
	\end{aligned}
	\label{eq16} 
\end{equation}
Finally, all paths are compared, amongst which the $L$ paths with the least PM values are reserved for stage II. 

In the aforementioned decoding process, the critical issue lies in how to determine $\beta_{q}$ in \eqref{eq15} for each SPC subcode. To achieve ML performance, the optimal approach is to estimate each information bit in the source node to obtain $\beta_{q}$, and then compute and compare the PMs of all candidate paths \cite{Ardakani2019Fast}. Suppose that the number of information bits in the source node is $K_q$, then the $L$ parent paths will each generate $2^{K_q}$ candidate paths, which is computationally costly. As an approximate ML solution, SCL decoding can be utilized to obtain a list of potential codewords for the source node. Prior to that, we first calculate the LLRs of the source node as \cite{Lu2023Fast}
\begin{equation}
	\begin{aligned}
		\alpha_{q}[k] = &\prod\limits_{j=1}^{N_p/N_q} \operatorname{sgn}(\alpha_{p}[(j-1)N_{q}+k])\\ &|\alpha_p[((1)^k_{|\alpha|}-1)N_q+k]|.
	\end{aligned}
	\label{eq17}
\end{equation}
Then, the source node is decoded using SCL decoding to produce a list of $\beta_{q}$. However, it is unclear whether a path should be reserved or pruned during the SCL decoding process, due to the unclear relationship between the PM at the source node level, i.e., $\operatorname{PM}_q$, and that at the root node level, i.e., $\operatorname{PM}_p$. For instance, the $L$ smallest PMs at the source node level do not necessarily lead to the $L$ smallest PMs at the root node level. To tackle this problem, we introduce below a theorem which illustrates the relationship between $\operatorname{PM}_q$ and $\operatorname{PM}_p$.

\begin{theorem}
	Suppose that each SPC subcode is decoded using Wagner decoding, then the relationship between $\operatorname{PM}_q$ and $\operatorname{PM}_p$ can be expressed by
	\begin{equation}
		\begin{aligned}
			\operatorname{PM}_p = &\sum_{i\in\mathcal{R}}\ln(1+e^{-|\alpha_p[i]|})+\operatorname{PM}_q\\
			\approx& \operatorname{PM}_q, \qquad \textnormal{(HWF)},
		\end{aligned}
		\label{eq18} 
	\end{equation}
	where $\mathcal{R}$ records all the bit indices except for those least reliable ones in each SPC subcode, i.e., $\mathcal{R}=\{1,2,\ldots,N_p\}-\{((1)^k_{|\alpha|}-1)N_q+k|1\leq k\leq N_q\}$.
	\label{theorem:2}
\end{theorem}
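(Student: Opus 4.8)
The plan is to prove the identity in \eqref{eq18} subcode by subcode, matching each penalty term of $\operatorname{PM}_q$ against the contribution of one SPC subcode to $\operatorname{PM}_p$. First I would abbreviate the index of the least reliable bit of the $k$-th subcode by $i_k \triangleq ((1)^k_{|\alpha|}-1)N_q+k$, so that $|\alpha_q[k]|=|\alpha_p[i_k]|$ by \eqref{eq17} and $\mathcal{R}=\{1,\ldots,N_p\}-\{i_1,\ldots,i_{N_q}\}$. Splitting the leading sum of \eqref{eq16} along $\mathcal{R}$ and its complement, and grouping the flip penalty of each subcode with its least-reliable term, gives
\begin{equation*}
\operatorname{PM}_p = \sum_{i\in\mathcal{R}}\ln(1+e^{-|\alpha_p[i]|}) + \sum_{k=1}^{N_q}\Big(\ln(1+e^{-|\alpha_p[i_k]|}) + \gamma^k_{\textrm{P-PC}}\,|\alpha_p[i_k]|\Big).
\end{equation*}
It then remains to show that the $k$-th bracketed term equals the $k$-th penalty $\ln(1+e^{-(1-2\beta_q[k])\alpha_q[k]})$ appearing in the source-node metric $\operatorname{PM}_q$ (cf. \eqref{eq5}).

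The hard part, and the step carrying the real content, is the identity
\begin{equation*}
\gamma^k_{\textrm{P-PC}} = \operatorname{HD}(\alpha_q[k]) \oplus \beta_q[k],
\end{equation*}
which ties the P-PC bookkeeping to whether the source-node decision agrees with the hard decision of the combined LLR. To establish it I would use $\operatorname{HD}(\alpha)=(1-\operatorname{sgn}(\alpha))/2$ together with the sign structure of \eqref{eq17}: since $\operatorname{sgn}(\alpha_q[k])=\prod_{j=1}^{N_p/N_q}\operatorname{sgn}(\alpha_p[(j-1)N_q+k])$, an even number of negative constituent LLRs yields a positive product, so the hard decision of the combined LLR equals the modulo-$2$ sum of the constituent hard decisions, i.e. $\operatorname{HD}(\alpha_q[k])=\bigoplus_{j=1}^{N_p/N_q}\beta_p[(j-1)N_q+k]$. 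Substituting this into the definition \eqref{eq15} of $\gamma^k_{\textrm{P-PC}}$ collapses the XOR to $\operatorname{HD}(\alpha_q[k])\oplus\beta_q[k]$; in particular $\gamma^k_{\textrm{P-PC}}=0$ exactly when $\beta_q[k]$ matches $\operatorname{HD}(\alpha_q[k])$, which is precisely the condition separating the two branches of the penalty \eqref{eq6}.

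With this identity in hand I would finish by an elementary two-case check, writing $a\triangleq|\alpha_p[i_k]|=|\alpha_q[k]|\ge 0$. If $\gamma^k_{\textrm{P-PC}}=0$ the decision matches, so $(1-2\beta_q[k])\alpha_q[k]=a$ and both the bracketed term and the penalty reduce to $\ln(1+e^{-a})$. If $\gamma^k_{\textrm{P-PC}}=1$ the least reliable bit is flipped, so $(1-2\beta_q[k])\alpha_q[k]=-a$, the penalty is $\ln(1+e^{a})$, and the bracketed term is $\ln(1+e^{-a})+a$; these agree by the elementary identity $\ln(1+e^{a})=a+\ln(1+e^{-a})$. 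Summing the matched terms over $k$ recovers $\operatorname{PM}_q$ and hence the exact relation of \eqref{eq18}, after which the HWF approximation $\ln(1+e^{-|\alpha_p[i]|})\approx 0$ (the exponent being non-positive for every $i\in\mathcal{R}$) collapses the residual sum to yield $\operatorname{PM}_p\approx\operatorname{PM}_q$. Beyond the $\gamma^k_{\textrm{P-PC}}$ identity I anticipate only bookkeeping; the one point to state carefully is that Wagner decoding flips exactly the single bit $i_k$ when $\gamma^k_{\textrm{P-PC}}=1$ and none otherwise, which is what pins the flip cost of each subcode to the single magnitude $|\alpha_p[i_k]|$.
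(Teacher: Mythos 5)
Your proof is correct and follows essentially the same route as the paper's: both rest on deriving $|\alpha_q[k]|$ and $\operatorname{HD}(\alpha_q[k])$ from \eqref{eq17}, identifying $\gamma^k_{\textrm{P-PC}}=\operatorname{HD}(\alpha_q[k])\oplus\beta_q[k]$ so that the flip penalty tracks the match/mismatch in \eqref{eq6}, and invoking the identity $\ln(1+e^{a})=a+\ln(1+e^{-a})$ before applying the HWF approximation. The only difference is bookkeeping: you match terms subcode by subcode with a two-case check, while the paper aggregates over the mismatch set $\mathcal{K}$ in \eqref{eq20}--\eqref{eq21}.
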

\begin{proof}
	First, based on \eqref{eq17}, we can obtain the absolute value and hard-decision result of $\alpha_{q}[k]$ as follows:
	\begin{equation}
		\begin{gathered}
			|\alpha_{q}[k]| = |\alpha_p[((1)^k_{|\alpha|}-1)N_q+k]|, \\
			\operatorname{HD}(\alpha_{q}[k]) =  \bigoplus\limits_{j=1}^{N_p/N_q} \beta_{p}[(j-1)N_q+k].
		\end{gathered}
		\label{eq19}
	\end{equation}
	By using \eqref{eq19}, the penalty term in \eqref{eq16} can be simplified as
	\begin{equation}
		\begin{aligned}
			&\sum_{k=1}^{N_q}\gamma^k_{\textrm{P-PC}}|\alpha_p[((1)^k_{|\alpha|}-1)N_q+k]|\\
			&= \sum_{k=1}^{N_q}\operatorname{HD}(\alpha_q[k]) \oplus \beta_q[k]|\alpha_q[k]| = \sum_{k\in\mathcal{K}}|\alpha_q[k]|,
		\end{aligned}
		\label{eq20}
	\end{equation}
	where $\mathcal{K} \triangleq \{k|1\leq k\leq N_q, \beta_q[k]\not=\operatorname{HD}(\alpha_q[k])\}$. Then, based on the identity $\ln (1+e^a)-\ln (1+e^{-a})=a$ with $a=|\alpha_q[k]|$, we can rewrite \eqref{eq20} as
	\begin{equation}
		\begin{aligned}
			&\sum_{k=1}^{N_q}\gamma^k_{\textrm{P-PC}}|\alpha_p[((1)^k_{|\alpha|}-1)N_q+k]|\\
			&= \sum_{k\in\mathcal{K}}\ln(1+e^{|\alpha_q[k]|})-\ln(1+e^{-|\alpha_q[k]|})\\
			&\quad+\sum_{k\in\mathcal{K}^c}\ln(1+e^{-|\alpha_q[k]|})-\ln(1+e^{-|\alpha_q[k]|})\\
			&\overset{(a)}{=}\sum_{k=1}^{N_q}\ln(1+e^{-(1-2\beta_q[k])\alpha_q[k]})-\sum_{k=1}^{N_q}\ln(1+e^{-|\alpha_q[k]|})\\
			&=\operatorname{PM}_q-\sum_{k=1}^{N_q}\ln(1+e^{-|\alpha_q[k]|})\\
			&\approx\operatorname{PM}_q, \quad \textnormal{(HWF)},
		\end{aligned}
		\label{eq21}
	\end{equation}
	where $\mathcal{K}^c\triangleq\{1,2,\ldots,N_q\}-\mathcal{K}$ and $(a)$ is derived by resorting to \eqref{eq6}. In particular, applying HWF approximation leads to a simpler result since $\sum_{k=1}^{N_q}\ln(1+e^{-|\alpha_q[k]|})\approx0$. Finally, by replacing the penalty term in \eqref{eq16} by \eqref{eq21}, \eqref{eq18} can be readily proved.
\end{proof}

Theorem~\ref{theorem:2} implies that the $L$ least reliable paths at the root node level can be determined at the source node level. This means that we can directly compare $\operatorname{PM}_p$ and perform path pruning during the SCL decoding process of the source node, and finally obtain a list of favourable paths for the subsequent decoding process. Note that previous fast list decoding techniques can be applied to decode the source node if it exhibits a special structure.

Unless otherwise stated, the subsequent stage II decoding is based on the results obtained from stage I, e.g., the reserved paths and PMs. Accordingly, the parent paths at the beginning of stage II are initialized as the reserved paths in stage I.

\subsection{Stage II Decoding of SR1 Nodes}
\label{sec4.3}
\subsubsection{FSL Decoding} 
\label{sec4.3.1}

First, at the root node level, we calculate the modified LLRs, denoted by $\delta[i]$, as follows:
\begin{equation}
	\begin{aligned}
		\delta[i] = |\alpha_p[i]| + (1-2\gamma^k_{\textrm{P-PC}})|\alpha_p[((1)^k_{|\alpha|}-1)N_q+k]|,
	\end{aligned}
	\label{eq22} 
\end{equation}
where $k=i\%N_q$. Next, we sort $\delta[i]$ in ascending order with a sorted index $(t)_{\delta}$, such that $\delta[(1)_{\delta}] \leq \delta[(2)_{\delta}] \leq \ldots \leq \delta[(N_p-N_q)_{\delta}]$, where $1 \leq t \leq N_p-N_q$. With this sorted index, we serially split the parent paths obtained from stage I. At each step $t$, two temporary candidate paths are generated based on each previously reserved parent path, by considering both hypotheses of flipping and non-flipping events. Specifically, by flipping two bits, i.e., $\beta_p[(t)_{\delta}]$ and $\beta_p[((1)^{k_t}_{|\alpha|}-1)N_q+k_t]$, where $k_t=(t)_{\delta}\%N_q$, a new temporary path can be generated. The other temporary path is exactly the path reserved from the former step as no bit is flipped. Starting from $\operatorname{PM}_{p,0}=\operatorname{PM}_{p}$, the PMs of the temporary candidate paths are updated as
\begin{equation}
	\begin{aligned}
		\operatorname{PM}_{p,t} 
		\!=\! \begin{cases} \operatorname{PM}_{p,t-1}, & \textnormal{if $\beta_p[(t)_{\delta}]\!=\!\operatorname{HD}(\alpha_p[(t)_{\delta}])$} \\ \operatorname{PM}_{p,t-1} + \delta_t, & \textnormal{otherwise}\end{cases},
	\end{aligned}
	\label{eq23} 
\end{equation}
where $\delta_t$ is given by
\begin{equation}
	\begin{aligned}
		\delta_t = |\alpha_p[(t)_{\delta}]| + (1-2\gamma^{k_t}_{\textrm{P-PC}})|\alpha_p[((1)^{k_t}_{|\alpha|}-1)N_q+k]|.
	\end{aligned}
	\label{eq24} 
\end{equation}
Meanwhile, $\gamma^{k_t}_{\textrm{P-PC}}$ is updated as
\begin{equation}
	\begin{aligned}
		\gamma^{k_t}_{\textrm{P-PC}}
		= \begin{cases} \gamma^{k_t}_{\textrm{P-PC}}, & \textnormal{if $\beta_p[(t)_{\delta}]=\operatorname{HD}(\alpha_p[(t)_{\delta}])$} \\ 1-\gamma^{k_t}_{\textrm{P-PC}}, & \textnormal{otherwise}\end{cases}.
	\end{aligned}
	\label{eq25} 
\end{equation}
Then, the $2L$ temporary paths are compared, amongst which the paths with the smallest PMs are reserved for the next step, through a sorting procedure $2L \to L$. Since no S-PC is introduced in SR1 nodes, such a sequential decoding procedure is able to ensure that all parity constraints are satisfied while the candidate paths are the most reliable ones at each step. 

The whole decoding algorithm will terminate in advance when this step is repeated for $\min(L-1, N_p-N_q)$ times, without degrading the decoding performance. Note that when the SR1 node is reduced to a G-PC node, the proposed sequential decoding algorithm will reduce to the fast list decoding introduced in \cite{Ren2022Sequence}.

\subsubsection{FPL Decoding}
\label{sec4.3.2}
Similar to the FPL decoder for SPC nodes in Section~\ref{sec3}, the aforementioned FSL decoder for SR1 nodes can also be highly parallelized based on the MCS as follows. First, an SR1 node can be interpreted as $2^q$ parallel SPC subcodes whose parity checks are obtained via stage I. Then, we decode these parallel SPC subcodes separately, using the proposed FPL decoder introduced in Section~\ref{sec3}. Finally, based on the MCS again, SR1 nodes can be decoded using the decoding results of these SPC subcodes. 

Specifically, for the $k$-th SPC subcode, we first calculate $\gamma^k_{\textrm{P-PC}}$ according to \eqref{eq17}, based on which the corresponding MCS $\mathcal{C}^{\textnormal{SPC}}$ is selected and $\sigma_{\mathcal{F}}$ can be calculated accordingly, where $\mathcal{F}\in\mathcal{C}^{\textnormal{SPC}}$ and $\mathcal{F}\not=\{\emptyset\}$. Specially, $\sigma_{\mathcal{F}}$ (given in \eqref{eq13}) should be properly modified according to $\gamma^k_{\textrm{P-PC}}$, i.e., 
\begin{equation}
	\begin{aligned}
		\sigma_{\mathcal{F}} = &\sum_{j \in \mathcal{F}}|\alpha_p[((j)^k_{|\alpha|}-1)N_q+k]| \\
		&- \gamma^k_{\textrm{P-PC}} |\alpha_p[((1)^k_{|\alpha|}-1)N_q+k]|.
	\end{aligned}
	\label{eq26}
\end{equation} 
Then, using a sorter with radix $N_q |\mathcal{C}|$, all $\sigma_{\mathcal{F}}$ are sorted and only the smallest $L$ ones are reserved to generate the corresponding candidate paths. After the sorting procedure $N_q |\mathcal{C}| \to L$, we assume that the sorted results are arranged in ascending order, i.e., $\sigma_1 \leq \sigma_2 \leq \ldots \leq \sigma_L$. 

However, the resultant candidate paths are not necessarily the most reliable ones, since the SPC nodes are decoded separately without considering their combinations. For example, if $\sigma_1+\sigma_2 < \sigma_L$ holds, then the $L$-th candidate path is no longer the $L$-th most reliable path and thus should be eliminated. In essence, by interpreting all $\sigma_i$ as the absolute LLRs of an R1 node, the MCS $\mathcal{C}^{\textnormal{R1}}$ (see Table~\ref{tab2}) can also be directly employed to select the minimum-combinations of $\sigma_i$, where $1\leq i \leq L$. Let $\Delta_{\mathcal{F}}$ denote the combination of $\sigma_i$ when using $\mathcal{F}\in\mathcal{C}^{\textnormal{R1}}$, i.e.,
\begin{equation}
	\begin{aligned}
		\Delta_{\mathcal{F}} = \sum_{i \in \mathcal{F}}\sigma_i,
	\end{aligned}
	\label{eq27}
\end{equation} 
then the associated PM can be obtained by
\begin{equation}
	\begin{aligned}
		\operatorname{PM}_{\mathcal{F}} = \operatorname{PM}_p + \Delta_{\mathcal{F}}.
	\end{aligned}
	\label{eq28}
\end{equation}
Finally, through a sorting procedure $|\mathcal{C}| L \to L$, all candidate paths are compared and the $L$ paths with the smallest PMs are reserved.

\subsection{Stage II Decoding of RSR1 Nodes} 
\subsubsection{FSL Decoding}
Inspired by the aforementioned sequential path splitting idea, where redundant paths are eliminated from being split during the decoding process, we present in the following an efficient FSL decoding algorithm for RSR1 nodes.

Generally, the proposed FSL decoder follows a similar sequential process as described in Section~\ref{sec4.3.1}, i.e., we split the parent paths step by step, using the ascending modified LLRs $\delta[i]$. As such, the PMs are incremented with $\delta[(t)_{\delta}]$ and corrected P-PCs are kept at each step. However, different from that in Section~\ref{sec4.3.1}, the validity of the temporary candidate paths cannot be guaranteed given the newly introduced S-PCs. Since the whole codeword space will be traversed using the sequential decoding process and the S-PCs can be automatically corrected, we therefore focus on modifying the original pruning strategy to reserve the valid paths instead of correcting the S-PCs. 

A straightforward approach is to eliminate paths according to their PM values. Denote each temporary candidate path as a three-element tuple, including its PM, codeword and validity. Specifically, the PMs and codewords are generated via the sequential decoding process as described in Section~\ref{sec4.3.1}, while the validity, denoted by $\gamma_{\textnormal{S-PC}}$, is calculated according to \eqref{eq10}:
\begin{equation}
	\begin{aligned}
		\gamma_{\textnormal{S-PC}} = \bigoplus\limits_{j=1}^{N_q/N_r/2} \bigoplus\limits_{k=1}^{N_r} \beta_{p}[(2j-1)N_r+k]. 
	\end{aligned}
	\label{eq29}
\end{equation}
At each step, we record the $L$-th smallest PM of all valid paths after path splitting and sorting. Then, we can safely eliminate the paths whose PMs are larger than this PM value, regardless of their validity. Note that the invalid paths with smaller PMs should also be reserved, since these paths may turn to be valid in the subsequent steps. At the final step, we terminate the whole decoding process by selecting the $L$ valid paths with the smallest PMs as the output.

To further reduce the computational complexity, we propose a modified pruning strategy, where only the paths that must not lead to redundant paths are split. At step $t$, a threshold, denoted by $\Delta_t$, is first calculated as
\begin{equation}
	\begin{aligned}
		\Delta_t = \widetilde{\operatorname{PM}}_{L,t-1}-\widetilde{\operatorname{PM}}_{1,t-1},
	\end{aligned}
	\label{eq30}
\end{equation}
where $\widetilde{\operatorname{PM}}_{l,t-1}$ represents the $l$-th smallest PM of the valid temporary candidate paths at step $t-1$. Then, the threshold $\Delta_t$ is applied on each path to check if this path should be split into two threads, or remain reserved, or be eliminated, which is shown as follows:
\begin{equation}
	\begin{aligned}
		&\textnormal{If $\delta[(t)_{\delta}] \leq \Delta_t$, then split the path as in \eqref{eq23}},\\
		&\textnormal{if $\delta[(t)_{\delta}] > \Delta_t$ and $\gamma_{\textnormal{S-PC}}=0$, then $\operatorname{PM}_{p,t} = \operatorname{PM}_{p,t-1}$}, \\
		&\textnormal{if $\delta[(t)_{\delta}] > \Delta_t$ and $\gamma_{\textnormal{S-PC}}=1$, then $\operatorname{PM}_{p,t} = +\infty$}, \\
	\end{aligned}
	\label{eq31} 
\end{equation}
where $\operatorname{PM}_{p,t} = +\infty$ indicates that this path should be directly eliminated from further decoding process. The whole path splitting procedure will be terminated when the number of the remaining paths is equal to $L$. Let $\tau$ denote the value of $t$ when the decoding algorithm is terminated, then $\tau$ is a variable whose value depends on the code parameters and channel condition.

\subsubsection{FPL Decoding}
Different from the SR1 nodes, there are additional S-PCs imposed on RSR1 nodes and it is difficult to directly apply the MCS-based method for FPL decoding of RSR1 nodes. To address this issue, we employ the FCS $\mathcal{S}$ introduced in our previous work \cite{Lu2023Fast}, which contains a number of flip coordinates each indicates two flipping bits, i.e., $\beta_{p}[(j_1-1)N_q+k]$ and $\beta_{p}[(j_2-1)N_q+k]$, and is denoted by $\mathcal{E}=(j_1,j_2,k)$. Considering that all the P-PCs have been corrected in stage I, we can utilize the flipping coordinates in $\mathcal{S}$ to correct the remaining S-PCs and thereby keep all the parity constraints satisfied. Specifically, to devise a list decoding algorithm, a direct but also heuristic approach is to generate a list of valid candidate paths by splitting each parent path using the FCS. Therefore, among the $|\mathcal{S}| L$ candidate paths, the $L$ paths with the smallest PMs are reserved, where $|\mathcal{S}|$ is the size of the FCS. However, these candidate paths are not necessarily the most reliable ones. Besides, the flip coordinates in $\mathcal{S}$ are not able to provide candidate paths for an existing path whose S-PCs are satisfied automatically. Therefore, employing this heuristic approach may lead to decoding performance degradation.

To tackle this problem, we propose to pre-process the parent paths, instead of directly splitting them using the FCS. First, using the sorted modified LLRs $\delta[i]$ (calculated by \eqref{eq22}), we split each parent path to $\upsilon$ additional candidate paths such that
\begin{equation}
	\begin{aligned}
		\operatorname{PM}^{o+1}_p
		= \operatorname{PM}_p + \delta[(o)_{\delta}],
	\end{aligned}
	\label{eq32}
\end{equation}
where $\operatorname{PM}^{1}_p=\operatorname{PM}_p$ presents the parent path, $\operatorname{PM}^{o}_p$ denotes the PMs of the obtained candidate paths and $1\leq o\leq \upsilon$ is the candidate path index. Since the S-PCs of each candidate path is not necessarily satisfied, we then further split each existing candidate path and obtain $(\upsilon+1) |\mathcal{S}|$ new paths with all parity constraints satisfied. The detailed path splitting procedure is similar to the fast SC decoding algorithm of SR1/SPC node with the aid of FCS (cf. Section~IV.~C in \cite{Lu2023Fast}), which is omitted here for brevity. Accordingly, the PMs of these new candidate paths are obtained by
\begin{equation}
	\begin{aligned}
		\operatorname{PM}^o_{\varepsilon}
		= \operatorname{PM}^o_p + \lambda_{\mathcal{E}},
	\end{aligned}
	\label{eq33}
\end{equation}
where $\lambda_{\mathcal{E}}$ is given by \cite{Lu2023Fast}
\begin{equation}
	\begin{aligned}
		\lambda_{\mathcal{E}} = &\sum_{j=j_1,j_2}(1-2 \beta_{p}[(j-1)N_q+k]) \alpha_{p}[(j-1)N_q+k].
	\end{aligned}
	\label{eq34}
\end{equation}
Finally, the reserved paths are determined by selecting the $L$ paths with the smallest PMs from $(\upsilon+1) |\mathcal{S}| L$ candidate paths. By using a sorter with large radix (e.g., radix-64 sorter for $\upsilon=3$, $|\mathcal{S}|=4$ and $L=4$), the proposed FPL decoder can achieve extremely low decoding latency.

\section{Simulation Results}
\label{sec5}
In this section, the decoding latency and error-correction performance of the proposed fast list decoders are compared with the state-of-the-art ones. Throughout this section, we consider CRC-aided polar codes \cite{Niu2012CRC}, denoted by $\mathcal{P}(N,K,r)$, where $r$ is the number of CRC bits. Unless otherwise specified, the considered polar codes are constructed using the Gaussian approximation method introduced in \cite{Trifonov2012Efficient}. Generally, the state-of-the-art fast list decoders presented in \cite{Hashemi2017Fast,Ardakani2019Fast,Zhao2021Minimum,Ren2022Sequence} are abbreviated as the \emph{SOTA decoders} in the following. As this paper mainly focuses on high SPC and SR1/SPC nodes, the decoding of the other special nodes (not covered by the SR1/SPC nodes) follows \cite{Hashemi2017Fast,Zhao2021Minimum,Ren2022Sequence}, where the fast list decoders for R0 and REP nodes are from \cite{Hashemi2017Fast}, the FSL and FPL decoders for R1 nodes are from \cite{Zhao2021Minimum}, and the fast list decoder for SR0/REP nodes is from \cite{Hashemi2017Fast} and \cite{Ren2022Sequence}. Combining the conventional CA-SCL decoder \cite{Niu2012CRC} with these special node decoders and together with the decoders proposed in this paper, the resultant decoder is referred to as the \emph{proposed decoder (with FSL/FPL)} hereafter, where the notation \emph{(with FSL/FPL)} indicates that the high-rate special nodes, including R1, SPC and SR1/SPC nodes, are all decoded by the FSL or FPL decoders.

\subsection{Decoding Latency Analysis}
In this subsection, we measure the decoding latency of various decoders by counting the required number of time steps, under the following assumptions \cite{Ardakani2019Fast,Zheng2021Threshold}. First, there is no limitation on hardware resources such that all the parallelizable operations can be carried out in one time step. Second, addition/subtraction of real numbers (e.g., the check-node operations) consume one time step. Third, the hard decision and bit operations can be performed instantly, without consuming any additional time steps. Last, we consider the following two approaches for sorting:
\begin{itemize}
	\item \emph{Full-rank sorters} \cite{Zhao2021Minimum,Ren2022Sequence}: The considered decoders support large-radix sorters such that the smallest $L$ PMs can be selected in one time step. 
	\item \emph{Pipeline-layered sorters} \cite{Ardakani2019Fast}: We only consider the standard $2L \to L$ sorter as the basic sorting unit. This means a large-radix sorter should be decomposed into multiple pipeline-layered $2L \to L$ sorters such that $\log_2|\mathcal{C}|$ time steps are required to sort $|\mathcal{C}| L$ PMs.
\end{itemize}
For the FPL decoders, the full-rank sorters enable the highest level of parallelism and lead to the best case in terms of decoding latency, whereas the pipeline-layered sorters, on the flip side, lead to the worst case. For clarify, we use the notations \emph{FPL-F} and \emph{FPL-P} to differentiate the FPL decoders when employing the full-rank and pipeline-layered sorters, respectively. 

\begin{remark}
	In practice, the full-rank sorters can be optimized for lower complexity. One of the most popular way is to prune the sorter if some input data have already been sorted in advance. Typically, the PMs of the $L$ parent paths are usually arranged in ascending order before path splitting, and thus the comparison between these PMs can be released from the $2L \to L$ sorting procedure \cite{Balatsoukas2015LLR}. In this work, apart from the rank orders provided by the PMs of the parent paths, we can also acquire some additional rank orders with the aid of MCS. Specifically, some candidate paths are intrinsically more reliable than the others according to their flip combinations, which means that the order of their PMs can be pre-determined. For instance, given $\mathcal{F}_1=\{1,2\}$ and $\mathcal{F}_2=\{2,3\}$ for an SPC node with $\gamma=0$ (see Table~\ref{tab2}), we naturally have $\operatorname{PM}_{\mathcal{F}_1}<\operatorname{PM}_{\mathcal{F}_2}$ according to \eqref{eq13} and \eqref{eq14}. This property can be employed to reduce the number of comparisons and further lower the computational complexity significantly. In case there is no prior rank order results, another approach is to design a partial-rank sorter that partially sorts the PMs, as done in \cite{Ren2022Sequence}, which is able to roughly halve the computational complexity. 
\end{remark}

\begin{table*}[t]
	\caption{Required Number of Time Steps to Decode Different Special Nodes}
	\centering
	\begin{threeparttable}
		\begin{tabular}{c|c|c|c|c}\hline
			& \multicolumn{4}{c}{SOTA decoders} \\ \cline{2-5}
			& TSP'17 \cite{Hashemi2017Fast} & TCOM'19 \cite{Ardakani2019Fast} & CL'21 \cite{Zhao2021Minimum} & TSP'22 \cite{Ren2022Sequence} \\ \hline

			R0 & 1 & 1 & 1 & 1\\ \hline
			
			REP & 2 & 2 & 2 & 2\\ \hline
			
			R1 & $\min(L-1,N_{p})$ & $\min(L-1,N_{p})$ & 1 & $\min(L-1,N_{p})$\\ \hline
			
			\textbf{SPC} & $\min(L,N_p)$ & $\min(L,N_p)$ & $\min(L,N_p)$ & $\min(L,N_p)$\\ \hline
					
			\textbf{SR1/SPC} 
			& \tabincell{c}{$T_q+2(p-q)$\\$+\sum_{s\in\mathcal{L}}\min(L,N_s)$\\$+\sum_{s\in\mathcal{L}^c}\min(L-1,N_s)$} & \tabincell{c}{$T_q+2(p-q)$\\$+\sum_{s\in\mathcal{L}}\min(L,N_s)$\\$+\sum_{s\in\mathcal{L}^c}\min(L-1,N_s)$\tnote{$\dagger$}} & \tabincell{c}{$T_q+2(p-q)+|\mathcal{L}^c|$\\$+\sum_{s\in\mathcal{L}}\min(L,N_s)$} & \tabincell{c}{$T_q+2(p-q)$\\$+\sum_{s\in\mathcal{L}}\min(L,N_s)$\\$+\sum_{s\in\mathcal{L}^c}\min(L-1,N_s)$\tnote{$\ddagger$}} \\ \hline
		\end{tabular}
		\begin{tablenotes}  
			\footnotesize
			\item[$\dagger$] When the SR1/SPC node is a Type III or Type IV node, this number will be reduced to $1+\min(L-1,N_{p}-2)$ or $1+\min(L-1,N_{p}-4)$, respectively.   
			\item[$\ddagger$] When the SR1/SPC node is a G-PC node, this number will be reduced to $1+\min(L-1,N_{p}-N_{q})$. 
		\end{tablenotes} 
		\vspace{1em}
	\end{threeparttable}
	\centering
	\begin{threeparttable}
		\begin{tabular}{c|c|c|c|c}\hline
			\multicolumn{2}{c|}{} & \multicolumn{3}{c}{Proposed decoders} \\ \cline{3-5}
			\multicolumn{2}{c|}{} & w/ FSL & w/ FPL-F & w/ FPL-P \\ \hline

			\multicolumn{2}{c|}{R1} & $1+\min(L-1,N_p)$\cite{Hashemi2017Fast} & $1$\cite{Zhao2021Minimum} & $\log_2|\mathcal{C}|$ \\ \hline			
			
			\multicolumn{2}{c|}{\textbf{SPC}} & $\min(L,N_p)$ & $1$ & $\log_2|\mathcal{C}|$ \\ \hline
			
			\multirow{2}{*}{\textbf{SR1/SPC}} 
			&
			\tabincell{c}{\textbf{SR1}} & \tabincell{c}{$T_q+1+\min(L-1,N_p-N_q)$} & \tabincell{c}{$T_q+3$} & \tabincell{c}{$T_q+1+\log_2N_q|\mathcal{C}|/L+\log_2|\mathcal{C}|$} \\ \cline{2-5}
			
			& \tabincell{c}{\textbf{RSR1}} & \tabincell{c}{$T_q+1+\tau$} & \tabincell{c}{$T_q+4$} & \tabincell{c}{$T_q+3+\log_2(1+\upsilon)|\mathcal{S}|$} \\ \hline
		\end{tabular}
	\end{threeparttable}
	\label{tab3}
	\vspace{-0.5em}
\end{table*}

Based on these assumptions, we provide the following latency analysis for the considered special node decoders. Note that the required time steps to decode SPC and SR1/SPC nodes depends on the employed list decoders.
\begin{itemize}
	\item SOTA decoder: The SOTA SPC decoder follows a sequential decoding procedure and thus consumes $1+\min(L-1,N_{p}-N_{q})$ time steps \cite{Hashemi2017Fast}. For an SR1/SPC node which cannot be directly decoded by the existing SOTA decoders, the source node and the descendant R1 or SPC nodes should be serially decoded following the decoding tree depicted in Fig.~\ref{fig:1}. Therefore, the total number of time steps can be calculated by adding the decoding latency of these nodes and the additional latency required by the check-node operations, which can be expressed by\footnote{In this case, the SOTA fast decoders introduced in \cite{Hashemi2017Fast,Ardakani2019Fast,Ren2022Sequence} are used.}
	\begin{equation*}
		\begin{aligned}
			T_q+2(p-q)+\sum_{s\in\mathcal{L}}\min(L,N_s)+\sum_{s\in\mathcal{L}^c}\min(L-1,N_s),
		\end{aligned}
	\end{equation*}
	where $T_q$ is the number of time steps to decode the source node. In particular, since the SOTA decoder in \cite{Zhao2021Minimum} is able to decode an R1 node in one time step by using an FPL-F decoder, this number can be reduced to
	\begin{equation*}
		\begin{aligned}
			T_q+2(p-q)+|\mathcal{L}^c|+\sum_{s\in\mathcal{L}}\min(L,N_s).
		\end{aligned}
	\end{equation*}
	\item Proposed decoder (with FSL): As mentioned above, the FSL decoder requires $1+\min(L-1,N_p-1)$ time steps to decode an SPC node \cite{Hashemi2017Fast}. For SR1 nodes, the FSL decoder consumes $T_q+1$ time steps in stage I, where the ``+1'' time step is for the LLR calculation in \eqref{eq17}. Besides, each decoding step in stage II requires one time step for path splitting and PM update, resulting in additional $\min(L-1,N_p-N_q)$ time steps. To sum up, the total number of time steps required to decode an SR1 node is $T_q+1+\min(L-1,N_p-N_q)$. Likewise, the proposed FSL decoder for RSR1 nodes consumes $T_q+1+\tau$ time steps in total, where $\tau$ time steps are required for the sequential decoding procedure in stage II. 
	\item Proposed decoder (with FPL-F): First, the proposed FPL decoder for SPC nodes can generate all candidate paths in one round of path splitting, such that only one time step is required to select $L$ candidate paths from $|\mathcal{C}| L$ potential ones. For an SR1 node, since it is decoded as a group of parallel SPC subcodes and then as an R1 node in stage II, two time steps are required in total, where one time step is for decoding the parallel SPC subcodes and the other one is for decoding the R1 node. For an RSR1 node, three time steps are required with two time steps to calculate the PMs in \eqref{eq32} and \eqref{eq33} and one time step to select $L$ candidate paths from $(\upsilon+1)|\mathcal{S}| L$ ones.
	\item Proposed decoder (with FPL-P): When employing the pipelined-layered sorting approach, more time steps are required by each large-radix sorter and the decoding latency is thus related with the sorter's radix, i.e., the number of the PMs to be sorted after path splitting. Following the above analysis of the proposed decoder (with FPL-F), the decoding latency can be re-calculated accordingly. 
\end{itemize}
In general, the latency reduction of the proposed FSL decoders stems from the savings of check-node operations and the separate decoding of the descendant nodes. On the other hand, the proposed FPL decoders can achieve significant decoding speedup by simplifying and parallelizing the sequential path splitting procedure, with the help of large-radix sorters. 

\subsection{Comparison with Existing Works}
In this subsection, we compare the decoding latency and error-correction performance of the SOTA decoders \cite{Hashemi2017Fast,Ardakani2019Fast,Zhao2021Minimum,Ren2022Sequence} with the proposed decoders. 
\subsubsection{Decoding Latency}
\begin{figure*}[!ht]
	\centering
	\setlength{\abovecaptionskip}{-0.3em}
	\includegraphics[width=0.9\textwidth]{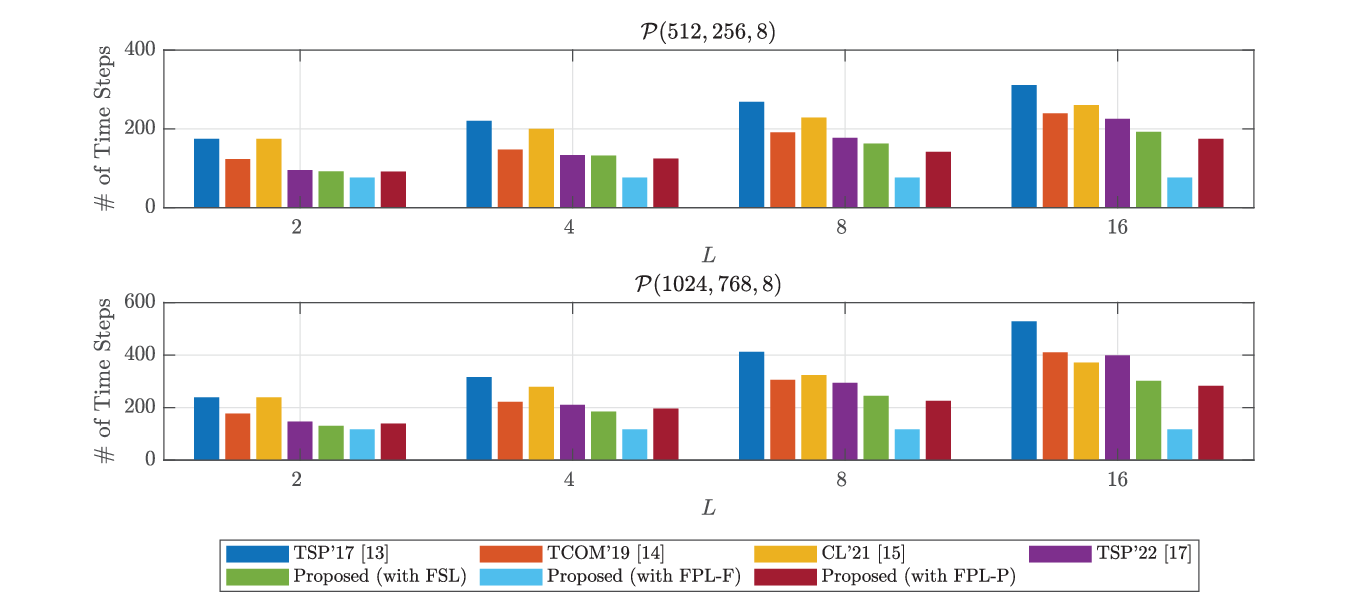}
	\caption{Required time step number comparison for the SOTA decoders \cite{Hashemi2017Fast,Ardakani2019Fast,Zhao2021Minimum,Ren2022Sequence} and the proposed decoder (with FSL/FPL).}
	\label{fig:3}
	\vspace{-1em}
\end{figure*} 

First, Fig.~\ref{fig:3} compares the required time step numbers for the considered decoders, where the code parameters are $\mathcal{P}(512,256,8)$ and $\mathcal{P}(1024,768,8)$ with $L=\{2,4,8,16\}$. Note that $\tau$ is counted at $E_b/N_0=2.0$ dB and $\upsilon$ is set to $\upsilon=L-1$. It can be observed that the proposed decoders require fewer time steps as compared to the SOTA decoders especially when the code rate is high. Amongst the considered decoders, the proposed decoder (with FPL-F) achieves the lowest decoding latency thanks to the highly parallelized path splitting procedure aided by large-radix sorters. Besides, except for the proposed decoder (with FPL-F), all the other decoders consume more time steps as the list size $L$ increases, which is consistent with the analysis in Table~\ref{tab3}. Furthermore, one can see that generally the time saving achieved by the proposed decoders tends to increase as the code length or list size becomes larger. By adopting the proposed decoder (with FPL-F), the decoding latency of the SOTA decoder in \cite{Ren2022Sequence} can be reduced by 66.2\%
and 70.7\% for $\mathcal{P}(512,256,8)$ and $\mathcal{P}(1024,768,8)$ with $L = 16$, respectively. In the other extreme, when employing the FPL-P decoders for lower complexity, up to 29.3\% decoding latency can be saved for $\mathcal{P}(1024,768,8)$ with $L = 16$.

\subsubsection{Error-Correction Performance}
\begin{figure}[!ht]
	\centering
	\setlength{\abovecaptionskip}{-0.2em}
	\includegraphics[width=0.47\textwidth]{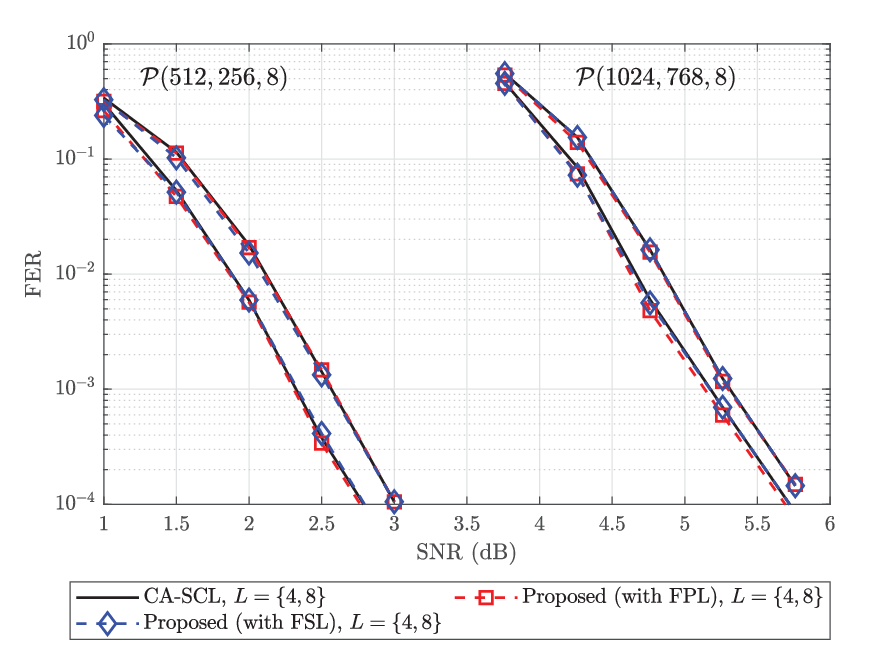}
	\caption{FER performance of the conventional CA-SCL decoder and the proposed decoder (with FSL/FPL), where $L=\{4,8\}$.}
	\label{fig:4}
	\vspace{-1em}
\end{figure} 
Then, we depict in Fig.~\ref{fig:4} the frame error rate (FER) performance of the conventional CA-SCL decoder and the proposed decoders, for $\mathcal{P}(512,256,8)$ and $\mathcal{P}(1024,768,8)$ with $L=\{4,8\}$. It can be observed that the FER performance is preserved when using the proposed decoders. 

\subsection{Empirical Optimizations}
\label{sec5.3}
In this subsection, we introduce two empirical optimization methods for the proposed FSL and FPL decoders, which can be used to further reduce the decoding complexity (in terms of either time, space or computation).

\subsubsection{Reducing the Number of Path Splitting}
\label{sec5.3.1}
All FSL decoders for high-rate special nodes need to perform path splitting for the bits in the root node, until all these bits are traversed (i.e., path splitting are performed for $N_p$ times). However, as mentioned in Section~\ref{sec4.3.1}, the number of path splitting can be limited to a certain value to preserve the error-correction performance, while reducing the latency and computational complexity caused by the redundant path splitting steps. In practice, we can further reduce this number at the cost of minor performance degradation. Based on some empirical simulations, different values can be employed as an upper limit on the number of path splitting (denoted as $T_{\textnormal{max}}$), which provides the flexibility to trade some error-correction performance for higher decoding speed and lower decoding complexity.

\begin{figure}[t]
	\centering
	\includegraphics[width=0.48\textwidth]{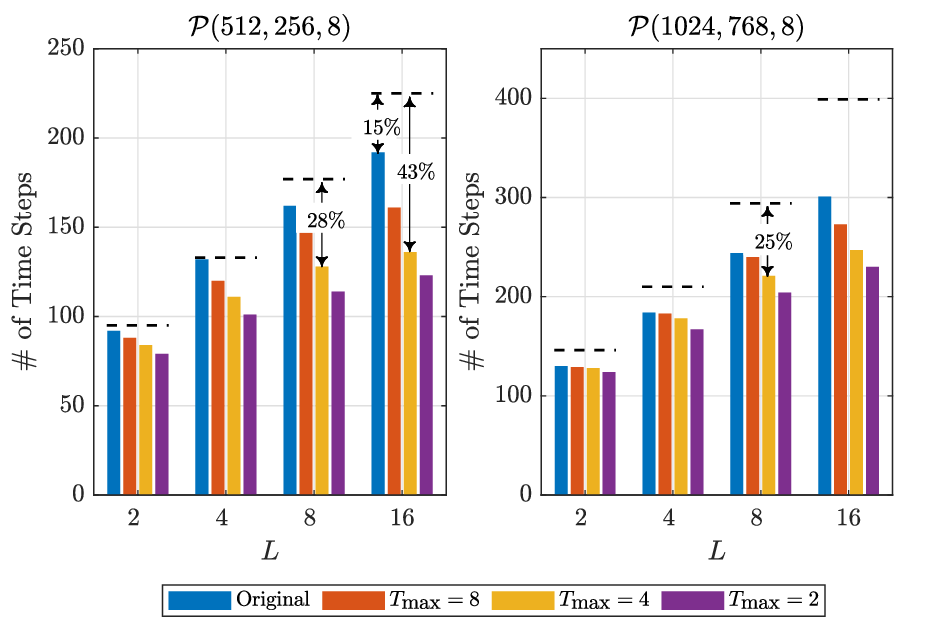}
	\caption{Decoding latency of the proposed decoder (with FSL) for different values of $T_{\textnormal{max}}$, where the black dashed lines represent the decoding latency of the SOTA decoder proposed in \cite{Ren2022Sequence}.}
	\label{fig:5}
	\vspace{-1em}
\end{figure} 

Fig.~\ref{fig:5} exhibits how $T_{\textnormal{max}}$ impacts the decoding latency of the proposed decoder (with FSL), where ``Original'' is the case without empirical optimization. It can be observed that reducing the number of path splitting results in different degrees of latency reduction. In particular, the decoding speedup with respect to the SOTA decoder is increased from 15\% to 43\% when setting $T_{\textnormal{max}}=4$, for $\mathcal{P}(512,256,8)$ with $L=16$. However, when the code length is larger, the speedup advantage diminishes in terms of percentage since the total decoding latency is increased. 

\begin{figure}[t]
	\centering
	\setlength{\abovecaptionskip}{-0.2em}
	\includegraphics[width=0.45\textwidth]{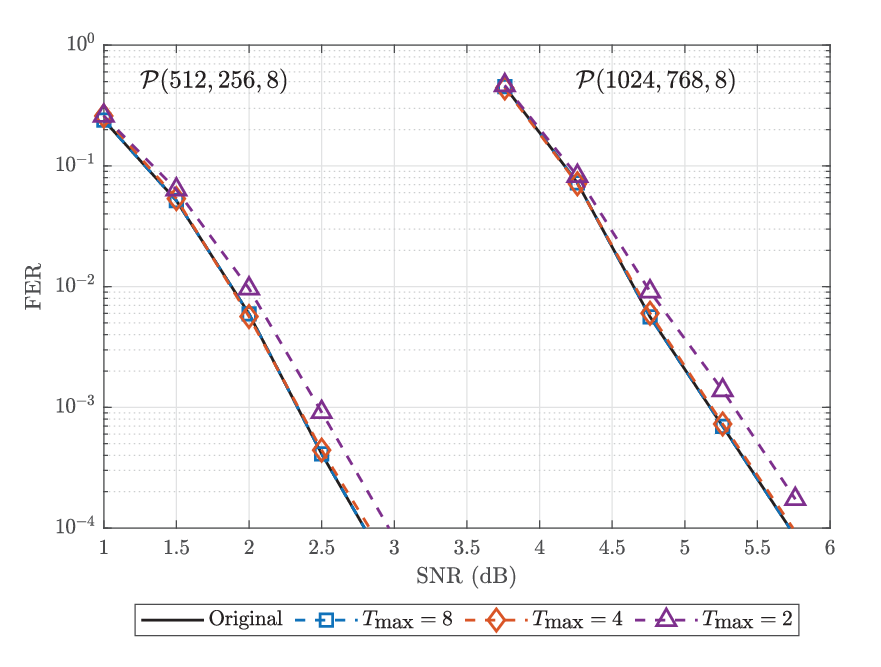}
	\caption{FER performance of the proposed decoder (with FSL) for different values of $T_{\textnormal{max}}$, where $L=8$.}
	\label{fig:6}
	\vspace{-1em}
\end{figure} 

To further investigate the impact of this optimization method on the decoding performance, we depict in Fig.~\ref{fig:6} the FER performance of the proposed decoder (with FSL) under different values of $T_{\textnormal{max}}$, for $\mathcal{P}(512,256,8)$ and $\mathcal{P}(1024,768,8)$ with $L=8$. As shown in Fig.~\ref{fig:6}, selecting $T_{\textnormal{max}}=\{4,8\}$ results in negligible performance degradation, while achieving considerable speedup advantages according to Fig.~\ref{fig:5}. When selecting $T_{\textnormal{max}}=2$, the decoding latency can be further reduced at the cost of about 0.2 dB performance loss. This means $T_{\textnormal{max}}=4$ is an appropriate value for the proposed FSL decoders with $L=8$, based on which the decoding latency as compared to the SOTA decoder proposed in \cite{Ren2022Sequence} can be reduced by 28\% and 25\% for $\mathcal{P}(512,256,8)$ and $\mathcal{P}(1024,768,8)$, respectively, while the error-correction performance is preserved.

\subsubsection{Restricting the Size of Pre-determined Sets}
\label{sec5.3.2}
All FPL decoders need to pre-determine some index sets, e.g., the MCS and FCS, to record the unreliable bit positions that should be eliminated from path splitting. In practice, employing these sets consumes extra memory space, and may also lead to higher computational complexity as more data are required to be compared through sorters. To achieve lower computational and space complexity, the sizes of these sets can be restricted to an empirical value. Typically, we halve the MCS size by properly excluding some unreliable flipping combinations. First, we set a threshold $I_{\textnormal{max}}$ given a specific list size $L$. Then, a flipping combination $\mathcal{E}$ is considered to be unreliable and thus should be eliminated, if it meets the condition $\sum_{i\in\mathcal{F}}>I_{\textnormal{max}}$. For instance, by setting $I_{\textnormal{max}}=6$ for $\mathcal{C}^{\textnormal{SPC}}$ with $\gamma=0$ (see Table~\ref{tab2}), we can obtain the optimized MCS as $\mathcal{C}^{\textnormal{SPC}}=\{\emptyset,\{1,2\},\{1,3\},\{1,4\},\{1,5\},\{2,3\},\{2,4\}\}$ with $|\mathcal{C}^{\textnormal{SPC}}|=7$. By carefully selecting the value of $I_{\textnormal{max}}$, the sizes of the MCSes can all be halved successfully. 

\begin{figure}[t]
	\centering
	\includegraphics[width=0.48\textwidth]{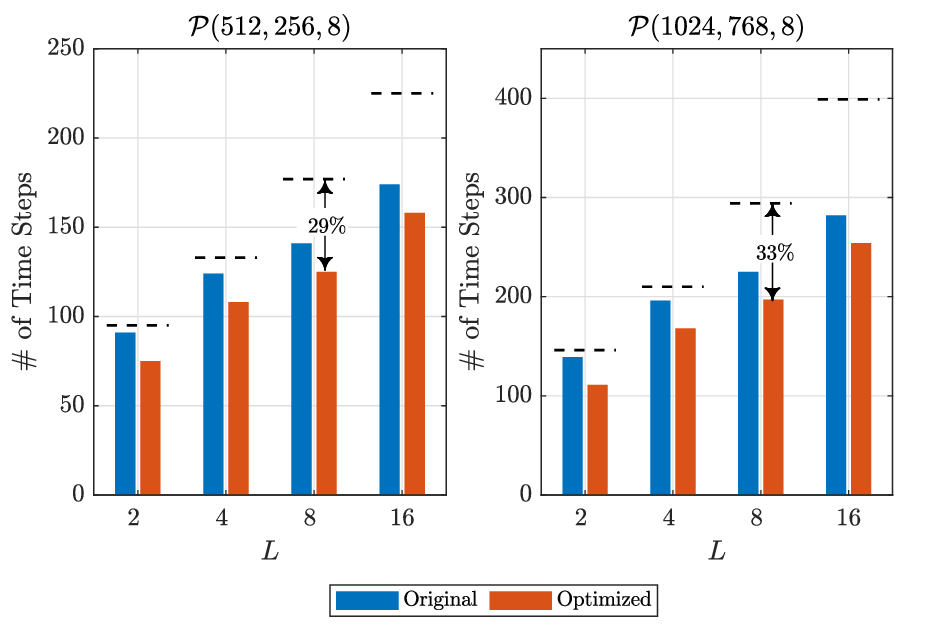}
	\caption{Decoding latency of the proposed decoder (with FPL-P) with or without the proposed optimization method, where the black dashed lines represent the decoding latency of the SOTA decoder proposed in \cite{Ren2022Sequence}.}
	\label{fig:7}
	\vspace{-1.5em}
\end{figure} 

\begin{figure}[!ht]
	\centering
	\setlength{\abovecaptionskip}{-0.5em}
	\includegraphics[width=0.45\textwidth]{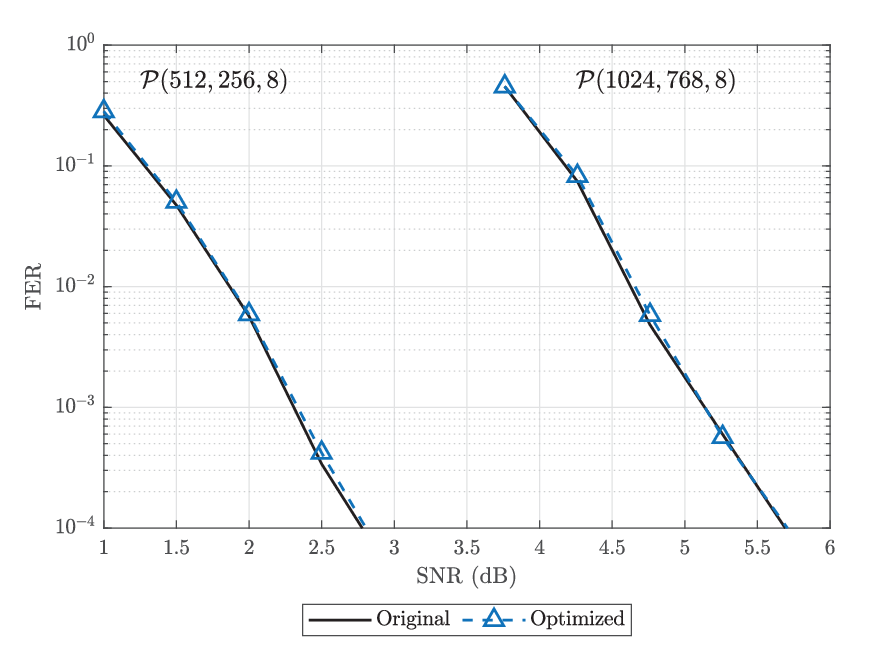}
	\caption{FER performance of the proposed decoder (with FPL) with or without the proposed optimization method, where $L=8$.}
	\label{fig:8}
	\vspace{-0.5em}
\end{figure} 

Similarly, we depict in Fig.~\ref{fig:7} and Fig.~\ref{fig:8} the decoding latency and FER performance of the proposed decoder (with FPL-P) when using the proposed optimization method, respectively. Combining the results in Fig.~\ref{fig:7} and Fig.~\ref{fig:8}, it can be observed that halving MCS size can further reduce the decoding latency without inducing any performance loss. In particular, the latency reduction with respect to the SOTA decoder is roughly 30\% for the considered polar codes, which amounts to the speedup advantage achieved by the proposed decoder (with FSL) (see Fig.~\ref{fig:5}). Note that the speedup achieved by the proposed decoder (with FPL-P) is measured under the worst case where only the pipeline-layered sorting approach is considered. Since the MCS size can be halved, it is more convenient to support large-radix sorters for the FPL decoders. As such, the decoding latency can be further reduced to approach the lower bound achieved by the proposed decoder (with FPL-F) (shown in Fig.~\ref{fig:3}). To summarize, although the whole search space of high-rate special nodes can be narrowed down to the proposed MCSes theoretically, it can be further shrunk by employing the  proposed empirical optimization methods to make the proposed FPL decoders more efficient.

\section{Conclusions}
\label{sec6}
In this work, we proposed fast list decoders to significantly reduce the SCL decoding latency of high-rate polar codes (SPC and SR1/SPC nodes). For SPC nodes, we showed how to parallelize the conventional sequential path splitting procedure by introducing the MCS to pre-determine the redundant paths in advance. For SR1/SPC nodes, we presented two decoders, i.e., the FSL and FPL decoders, to provide achieve a flexible tradeoff between decoding latency and complexity. Compared with the SOTA fast list decoders, the proposed decoders can preserve the error-correction performance yet with considerable lower decoding latency.


{
\bibliographystyle{IEEEtran}
\bibliography{IEEEabrv,mybibfile}

\begin{thebibliography}{10}
\providecommand{\url}[1]{#1}
\csname url@samestyle\endcsname
\providecommand{\newblock}{\relax}
\providecommand{\bibinfo}[2]{#2}
\providecommand{\BIBentrySTDinterwordspacing}{\spaceskip=0pt\relax}
\providecommand{\BIBentryALTinterwordstretchfactor}{4}
\providecommand{\BIBentryALTinterwordspacing}{\spaceskip=\fontdimen2\font plus
\BIBentryALTinterwordstretchfactor\fontdimen3\font minus
  \fontdimen4\font\relax}
\providecommand{\BIBforeignlanguage}[2]{{%
\expandafter\ifx\csname l@#1\endcsname\relax
\typeout{** WARNING: IEEEtran.bst: No hyphenation pattern has been}%
\typeout{** loaded for the language `#1'. Using the pattern for}%
\typeout{** the default language instead.}%
\else
\language=\csname l@#1\endcsname
\fi
#2}}
\providecommand{\BIBdecl}{\relax}
\BIBdecl

\bibitem{Arikan2009Channel}
E.~Arikan, ``Channel polarization: A method for constructing capacity-achieving
  codes for symmetric binary-input memoryless channels,'' \emph{{IEEE} Trans.
  Inf. Theory}, vol.~55, no.~7, pp. 3051--3073, Jul. 2009.

\bibitem{5Gstandard}
{3GPP}, ``{NR}; {M}ultiplexing and {C}hannel {C}oding ({R}elease15),'' {T}ech.
  Rep. TS 38.212V15.2.0 (2018-06), Jan. 2018. [Online]. Available:
  \url{http://www.3gpp.org/ftp//Specs/archive/38_series/38.212/38212-f20.zip}.

\bibitem{Tal2015List}
I.~Tal and A.~Vardy, ``List decoding of polar codes,'' \emph{{IEEE} Trans. Inf.
  Theory}, vol.~61, no.~5, pp. 2213--2226, May 2015.

\bibitem{Niu2012CRC}
K.~Niu and K.~Chen, ``{CRC}-aided decoding of polar codes,'' \emph{{IEEE}
  Commun. Lett.}, vol.~16, no.~10, pp. 1668--1671, Oct. 2012.

\bibitem{Balatsoukas2017Comparison}
A.~Balatsoukas-Stimming, P.~Giard, and A.~Burg, ``Comparison of polar decoders
  with existing low-density parity-check and turbo decoders,'' in \emph{Proc.
  IEEE Wireless Commun. Netw. Conf. Workshops (WCNCW)}, Mar. 2017, pp. 1--6.

\bibitem{Alamdar2011Simplified}
A.~Alamdar-Yazdi and F.~R. Kschischang, ``A simplified successive-cancellation
  decoder for polar codes,'' \emph{{IEEE} Commun. Lett.}, vol.~15, no.~12, pp.
  1378--1380, Dec. 2011.

\bibitem{Sarkis2013Increasing}
G.~Sarkis and W.~J. Gross, ``Increasing the throughput of polar decoders,''
  \emph{{IEEE} Commun. Lett.}, vol.~17, no.~4, pp. 725--728, Apr. 2013.

\bibitem{Sarkis2014Fast}
G.~Sarkis, P.~Giard, A.~Vardy, C.~Thibeault, and W.~J. Gross, ``Fast polar
  decoders: Algorithm and implementation,'' \emph{{IEEE} J. Sel. Areas
  Commun.}, vol.~32, no.~5, pp. 946--957, May 2014.

\bibitem{Hanif2017Fast}
M.~Hanif and M.~Ardakani, ``Fast successive-cancellation decoding of polar
  codes: Identification and decoding of new nodes,'' \emph{{IEEE} Commun.
  Lett.}, vol.~21, no.~11, pp. 2360--2363, Nov. 2017.

\bibitem{Condo2018Generalized}
C.~Condo, V.~Bioglio, and I.~Land, ``Generalized fast decoding of polar
  codes,'' in \emph{Proc. IEEE Global Commun. Conf. (GLOBECOM)}, Dec. 2018, pp.
  1--6.

\bibitem{Sarkis2016Fast}
G.~Sarkis, P.~Giard, A.~Vardy, C.~Thibeault, and W.~J. Gross, ``Fast list
  decoders for polar codes,'' \emph{{IEEE} J. Sel. Areas Commun.}, vol.~34,
  no.~2, pp. 318--328, Feb. 2016.

\bibitem{Hashemi2016Fast}
S.~A. Hashemi, C.~Condo, and W.~J. Gross, ``A fast polar code list decoder
  architecture based on sphere decoding,'' \emph{IEEE Trans. Circuits Syst. I,
  Reg. Papers}, vol.~63, no.~12, pp. 2368--2380, Dec. 2016.

\bibitem{Hashemi2017Fast}
------, ``Fast and flexible successive-cancellation list decoders for polar
  codes,'' \emph{{IEEE} Trans. Signal Process.}, vol.~65, no.~21, pp.
  5756--5769, Nov. 2017.

\bibitem{Ardakani2019Fast}
M.~H. Ardakani, M.~Hanif, M.~Ardakani, and C.~Tellambura, ``Fast
  successive-cancellation-based decoders of polar codes,'' \emph{{IEEE} Trans.
  Commun.}, vol.~67, no.~7, pp. 4562--4574, Jul. 2019.

\bibitem{Zhao2021Minimum}
Y.~Zhao, Z.~Yin, Z.~Wu, and M.~Xu, ``Minimum-combinations set-based rate-1
  decoder for fast list decoding of polar codes,'' \emph{{IEEE} Commun. Lett.},
  vol.~25, no.~10, pp. 3185--3189, Oct. 2021.

\bibitem{Zheng2021Threshold}
H.~Zheng, S.~A. Hashemi, A.~Balatsoukas-Stimming, Z.~Cao, T.~Koonen, J.~M.
  Cioffi, and A.~Goldsmith, ``Threshold-based fast successive-cancellation
  decoding of polar codes,'' \emph{{IEEE} Trans. Commun.}, vol.~69, no.~6, pp.
  3541--3555, Jun. 2021.

\bibitem{Ren2022Sequence}
Y.~Ren, A.~T. Kristensen, Y.~Shen, A.~Balatsoukas-Stimming, C.~Zhang, and
  A.~Burg, ``A sequence repetition node-based successive cancellation list
  decoder for {5G} polar codes: Algorithm and implementation,'' \emph{{IEEE}
  Trans. Signal Process.}, vol.~70, pp. 5592--5607, 2022.

\bibitem{Lu2023Fast_Conf}
Y.~Lu, M.-M. Zhao, M.~Lei, and M.-J. Zhao, ``Fast decoding of sequence rate-1
  or spc nodes for polar codes,'' in \emph{Proc. IEEE Int. Conf. Commun.
  (ICC)}, May 2023, pp. 125--131.

\bibitem{Lu2023Fast}
------, ``Fast successive-cancellation decoding of polar codes with sequence
  nodes,'' \emph{IEEE Trans. Green Commun. Netw.}, 2023, doi:
  10.1109/TGCN.2023.3324205.

\bibitem{Zheng2020Implementation}
H.~Zheng, A.~Balatsoukas-Stimming, Z.~Cao, and T.~Koonen, ``Implementation of a
  high-throughput fast-{SSC} polar decoder with sequence repetition node,'' in
  \emph{Proc. Int. Workshop Signal Process. Syst. (SiPS)}, Oct. 2020, pp. 1--6.

\bibitem{Balatsoukas2015LLR}
A.~Balatsoukas-Stimming, M.~B. Parizi, and A.~Burg, ``{LLR}-based successive
  cancellation list decoding of polar codes,'' \emph{{IEEE} Trans. Signal
  Process.}, vol.~63, no.~19, pp. 5165--5179, Oct. 2015.

\bibitem{Silverman1954Coding}
R.~Silverman and M.~Balser, ``Coding for constant-data-rate systems,''
  \emph{Trans. IRE Prof. Group Inf. Theory}, vol.~4, no.~4, pp. 50--63, Apr.
  1954.

\bibitem{Trifonov2012Efficient}
P.~Trifonov, ``Efficient design and decoding of polar codes,'' \emph{{IEEE}
  Trans. Commun.}, vol.~60, no.~11, pp. 3221--3227, Nov. 2012.

\end{thebibliography}
}

\end{document}